\documentclass[11pt]{article}

\addtolength{\textwidth}{2cm}
\addtolength{\oddsidemargin}{-1cm}

\usepackage{amsmath,amssymb,amsthm}

\newcommand{\R}{\mathbb{R}}
\newcommand{\C}{\mathbb{C}}

\newcommand{\K}{{\cal K}}
\newcommand{\F}{{\cal F}}

\newcommand{\pr}{{\rm pr}}

\newcommand{\D}{{\cal D}}
\newcommand{\Abar}{\overline{\cal A}}

\newcommand{\h}{{\cal H}}
\newcommand{\scal}[2]{\langle #1| #2\rangle}

\newcommand{\eps}{\epsilon}

\newcommand{\Cyl}{{\rm Cyl}}

\newcommand{\ot}{\otimes}
\newcommand{\bld}[1]{\boldsymbol{#1}}
\newcommand{\tl}[1]{\tilde{#1}}

\newcommand{\la}{\lambda}
\DeclareMathOperator{\tr}{{\rm tr}}
\DeclareMathOperator{\spn}{{\rm span}}
\newcounter{mnotecount}[section]

\newtheorem{thr}{Theorem}
\newtheorem{lm}[thr]{Lemma}
\newtheorem{df}[thr]{Definition}
\newtheorem{cor}[thr]{Corollary}
\newtheorem{pro}[thr]{Proposition}

\numberwithin{equation}{section}
\numberwithin{thr}{section}

\begin{document}

\title{Construction of spaces of kinematic quantum states for field theories via projective techniques\footnote{This is an author-created, un-copyedited version of an article accepted for publication in Classical and Quantum Gravity. IOP Publishing Ltd is not responsible for any errors or omissions in this version of the manuscript or any version derived from it. The definitive publisher authenticated version is available online at http://dx.doi.org/10.1088/0264-9381/30/19/195003.}}
\author{ Andrzej Oko{\l}\'ow}
\date{September 10, 2013}

\maketitle
\begin{center}
{\it  1. Institute of Theoretical Physics, Warsaw University\\ ul. Ho\.{z}a 69, 00-681 Warsaw, Poland\smallskip\\
2. Department of Physics and Astronomy, Louisiana State University,\\
Baton Rouge, LA 70803, USA\smallskip\\
oko@fuw.edu.pl}
\end{center}
\medskip

\begin{abstract}
We present a method of constructing a space of quantum states for a field theory: given phase space of a theory, we define a family of physical systems each possessing a finite number of degrees of freedom, next we define a space of quantum states for each finite system, finally using projective techniques we organize all these spaces into a space of quantum states which corresponds to the original phase space. This construction is kinematic in this sense that it bases merely on the structure of the phase space of a theory and does not take into account possible constraints on the space. The construction is a generalization of a construction by Kijowski---the latter one is limited to theories of linear phase spaces, while the former one is free of this limitation. The method presented in this paper enables to construct a space of quantum states for the Teleparallel Equivalent of General Relativity.
\end{abstract}

\section{Introduction \label{intro}}

In this paper we are concerned with a construction of a space of quantum states for a field theory which could serve as an element of canonical quantization (possibly, a modification of canonical quantization) of the theory. More precisely, we would like to construct a space of quantum states associated with the phase space of a field theory neglecting at this stage possible constraints on the phase space---here we adopt the Dirac's approach to canonical quantization of constrained systems: ``first quantize, then solve the constraints''. In this sense the space of quantum states we are going to construct is ``kinematic''. 

The main motivation in addressing the issue of constructing such spaces is our wish to quantize canonically the Teleparallel Equivalent of General Relativity (TEGR). This theory is background independent (see \cite{mal-rev} for the newest review of TEGR) and therefore it is quite natural to try to build a space of quantum states for it in a background independent manner. However, the only well developed background independent methods of constructing spaces of quantum states, that is, Loop Quantum Gravity (LQG) methods (see e.g. \cite{rev,rev-1} for reviews of LQG), for a reason described below do not seem to be useful in the case of TEGR. This fact motivated us to develop  an other method of constructing such spaces.

The task of constructing a space of kinematic quantum states for a field theory is not very easy because an unconstrained phase space of a field theory is an ``infinite-dimensional'' space. To overcome difficulties connected with this property of the phase space one can proceed in the following way: first one reduces the phase space to a finite dimensional space representing a physical system of a finite number of degrees of freedom. Once one has defined sufficiently many finite systems in this sense that they altogether contain all relevant information encoded in the original phase space one defines a space of quantum states (e.g. a Hilbert space) for each finite system. The last step of this procedure consists in building a ``large'' space of quantum states from ``small'' spaces of quantum states, that is, from spaces of quantum states associated with all the finite systems. Assuming that the finite systems are labeled by a {\em directed set} one can try to equip the family of ``small'' spaces with the structure of either $(i)$ an inductive family or $(ii)$ a projective family and then define the ``large'' space as the corresponding inductive or projective limit. 

It turns out that both inductive and projective techniques can be quite successful in particular cases.  An example of a space of quantum states obtained by an application of inductive methods is the kinematic Hilbert space of LQG \cite{al-hoop}\footnote{In the original paper \cite{al-hoop} presenting the construction of the Hilbert space inductive techniques are not used explicitely, but it can be shown that this Hilbert space is the inductive limit of a family of Hilbert spaces associated with some finite systems---see e.g. \cite{oko-ncomp} for some details.}, a space of quantum states built by means of projective methods can be found in a paper \cite{kpt} by Kijowski (see also \cite{kpt-qed}). However, both methods have their limitations: the inductive techniques applied in LQG do not work well if the configuration space (that is, the space of ``positions'') of each finite system is {\em non-compact} \cite{oko-ncomp}; the Kijowski's projective techniques work only for field theories of {\em linear} phase spaces i.e. of phase spaces possessing the structure of a linear (vector) space\footnote{The Kijowski's projective techniques need a linear structure to be defined on the phase space of a theory. However, they do not require this linear structure to be unique or ``canonical''. Therefore the techniques are applicable if e.g. a phase space is an affine space \cite{kpt-qed}.}. Regarding applicability of both methods to TEGR: the LQG methods do not seem to be applicable to TEGR since in this case it is rather difficult to define naturally finite systems of compact configuration spaces. On the other hand, the phase space of TEGR is non-linear (see e.g. \cite{oko-tegr}) and therefore the Kijowski's method cannot be applied either.

In this paper we present a generalization of the Kijowski's construction which is not limited by linearity of a phase space. This generalization is, roughly speaking, based on two assumptions: $(i)$ the phase space is a Cartesian product $P\times Q$ of a space $P$ of momenta and a space $Q$ of ``positions'', where in general $Q$ {\em is not} a linear space $(ii)$ there exists a family of functions on $Q$ of special properties which define an embedding of $Q$ into a linear space. Let us emphasize that the resulting space of quantum states is not a Hilbert space but it is rather a convex set of quantum states which can be identified with algebraic states (i.e. linear positive normed functionals) on a $C^*$-algebra which on the other hand can be interpreted as an algebra of some quantum observables \cite{kpt}.  

The generalized method can be applied both to background independent and background dependent theories. Since the method encompasses the original Kijowski's construction it is justified to say that the generalized method provides a space of kinematic quantum states for a scalar field theory \cite{kpt}. Moreover, as shown in the present paper the method can be successfully applied to a simple background independent field theory introduced in \cite{oko-ncomp} called here {\em degenerate Pleba\'nski gravity (DPG)}. Finally, the method provides a space of quantum states for TEGR---a detailed description of this construction will be presented in forthcoming papers \cite{q-suit,ham-nv,q-stat}. In other words, in the case of TEGR the method  allows to carry out the first step of the Dirac's procedure, that is, ``first quantize'' (as it can be expected, the second step, i.e. ``solve constraints'', is in this case much harder than the first one and by now we are still far away from a solution).

Let us mention also, that there is a by-product of the generalized method. Although the method uses projective techniques it turns out that for each theory for which the method works well one can additionally construct another space of quantum states using inductive techniques. This another space is a Hilbert space constructed from some almost periodic functions.

The paper is organized as follows: Section 2 contains preliminaries, in Section 3 being the main part of the paper we present detailed assumptions underlying the  generalized construction of spaces of quantum states, carry out the construction and finally present some facts which may be helpful while applying the generalized method in practice. Section 4 describes the by-product of the construction, that is, the Hilbert space mentioned above. In Section 5 we present an example of an application of the generalized method---we construct a space of quantum states for DPG. Finally, Section 6 contains a discussion of the results.

\section{Preliminaries \label{pre}}

The construction of spaces of quantum states we are going to describe in this paper will be obviously patterned on the original one by Kijowski \cite{kpt}---a detailed comparison of the construction presented here and that of \cite{kpt} can be found in Section \ref{disc-comp}. Additionally we will apply to the construction some notions like e.g. cylindrical functions or some linear operators on them taken from LQG (see e.g. \cite{acz,cq-diff,rev,rev-1} and references therein).  

Consider a Hamiltonian formulation of a field theory. Assume that the phase space of the formulation is a Cartesian product $P\times Q$, where $P$ is a space of momenta and $Q$ is a space of ``positions'' which will be called here {\em a Hamiltonian configuration space}. A point $p\in P$ as well as a point $q\in Q$ is a (collection of) field(s) on a manifold $\Sigma$ which can be thought of as a space-like slice of a spacetime.      

Let $\K$ be a collection of real functions on $Q$. The functions in $\K$ will be called {\em (configurational) elementary  degrees of freedom} if they separate points in $Q$. A configurational elementary d.o.f. will be usually denoted by $\kappa$ possibly with some indeces. 

Similarly, let ${\cal F}$ be a collection of real functions on $P$. The functions in $\F$ will be called {\em (momentum) elementary  degrees of freedom} if they separate points in $P$. A momentum elementary d.o.f. will be usually denoted by $\varphi$ possibly with some indeces.

Let $K=\{\kappa_{1},\ldots,\kappa_N\}\subset{\cal K}$ be a finite set of elementary d.o.f.. We say that $q\in Q$ is $K$-related to $q'\in Q$, 
\[
q\sim_{K} q',
\]
if for every $\kappa_i\in{K}$      
\[
\kappa_i(q)=\kappa_i(q').
\]
Clearly, the relation $\sim_{K}$ is an equivalence one. Therefore it is meaningful to consider the following quotient space
\begin{equation}
Q_{K}:=Q/\sim_{K}.
\label{quot}
\end{equation}
Note now that there exist an {\em injective} map\footnote{A set $K$ is unordered, thus to define the map $\tilde{K}$ one has to order elements of $K$. However, every choice of the ordering is equally well suited for our purposes and nothing essential depends on the choice. Therefore we will neglect this subtlety throughout the paper.} from $Q_{K}$ into $\R^N$:
\begin{equation}
Q_{K}\ni[q]\mapsto\tilde{K}([q]):=(\kappa_{1}(q),\ldots,\kappa_{N}(q))\in\R^N,
\label{k-inj}
\end{equation}
where $N$ is the number of elementary d.o.f. constituting $K$ and $[q]$ denotes the equivalence class of $q$ defined by the relation $\sim_{K}$.       

We will say that elementary d.o.f. in ${K}=\{\kappa_{1},\ldots,\kappa_{N}\}$ are {\em independent} if the image of $\tilde{K}$ is an $N$-dimensional submanifold of $\R^N$. Then the map $\tilde{K}$ is a bijection onto its image and therefore it can be used to pull-back the topology and the differential structure from the image onto $Q_{K}$. In this case functions $(x_1,\ldots,x_N)$ defined as  
\begin{equation}
Q_K\ni[q]\mapsto x_i([q]):=\kappa_i(q)\in\R
\label{lin-coor}
\end{equation}
constitute a {\em global} coordinate frame on $Q_K$. Note that 
\[
(x_1([q]),\ldots,x_N([q]))=\tilde{K}([q]).
\] 

It may happen that for distinct sets of independent d.o.f. ${K}$ and ${K}'$ the spaces $Q_{K}$ and $Q_{{K}'}$ coincide, i.e. for every $q\in Q$ 
\[
[q]=[q]',
\]
where $[q]'$ is the equivalence class defined by the relation $\sim_{K'}$. We assume that then the topologies and differential structures defined on $Q_{K}=Q_{{K}'}$ by $\tilde{K}$ and $\tilde{K}'$ coincide as well---in other words we assume that  $\tilde{K}'\circ\tilde{K}^{-1}$ is a diffeomorphism from the image of  $\tilde{K}$ onto the image of $\tilde{K}'$.            

\begin{df}
A quotient space $Q_{K}$ will be called  a reduced configuration space if the elementary  d.o.f. constituting $K$ are independent. 
\end{df}
Since now the symbol $Q_{K}$ will denote a reduced configuration space defined by the set $K$ of independent d.o.f..

Denote by $\pr_{K}$ the canonical projection from $Q$ onto $Q_{K}$:
\begin{equation}
Q\ni q\mapsto\pr_{K}(q)=[q]\in Q_{K}.
\label{pr-K}
\end{equation}

\begin{df}
We say that a function $\Psi:Q\to\C$ is a cylindrical function compatible with the set ${K}$ of independent d.o.f. if 
\begin{equation}
\Psi=\pr^*_{K}\,\psi
\label{Psi-cyl}
\end{equation}
for some smooth function $\psi:Q_{K}\to\C$. 
\end{df}

Suppose now that $Q_{K}=Q_{{K}'}$. Then $\Psi$ is a cylindrical function compatible with $K$ if and only if  $\Psi$ is a cylindrical function compatible with ${K}'$.  

Denote by $\Cyl$ a complex linear space of functions on $Q$ spanned by all the cylindrical functions on $Q$. We assume that each elementary momentum d.o.f. $\varphi$ defines a linear operator $\hat{\varphi}$ on $\Cyl$ via the Poisson bracket on the phases space $P\times Q$
\begin{equation}
\Cyl\ni\Psi\mapsto\hat{\varphi}\Psi:=\{\varphi,\Psi\}\in\Cyl
\label{hat-varphi}
\end{equation}
or, if necessary, a suitable regularization of the Poisson bracket at the r.h.s. of this formula{\footnote{Arguments of a Poisson bracket are usually functionals on the phase space defined via three-dimensional integrals over $\Sigma$. But if $q$ is a one-form on $\Sigma$ and $p$ a two-form then it is natural to define elementary d.o.f. via integrals of the forms over submanifolds of $\Sigma$ of appropriate dimensions. In such a case the Poisson bracket $\{\varphi,\Psi\}$ needs a regularization. An example of an operator $\hat{\varphi}$ defined via a regularization of such a Poisson bracket is the flux operator of LQG \cite{acz,rev-1}---the regularization consists in smearing components of $p$ with a three-dimensional smearing field which tends to a distribution supported on a two-dimensional surface. Importantly, the regularization is compatible with the diffeomorphism invariance of LQG. In fact, operators $\{\hat{\varphi}\}$ used to construct a space of quantum states for DPG in Section \ref{DPG} and for TEGR in \cite{q-stat} are such flux operators.}}. 
  
We will denote by $\hat{\cal F}$ a real linear space of operators on $\Cyl$ spanned by the operators \eqref{hat-varphi}:
\[
\hat{\F}:=\spn_\R\{\ \hat{\varphi}\ | \ \varphi\in\F\ \}.
\]    

\section{Construction of the space of quantum states}

In this section we will construct a space $\D$ of quantum states for a theory of the phase space $P\times Q$.

\subsection{Outline of the construction \label{outline}}

Before we will describe the construction let us denote by $\hat{\mathbf{F}}$ the set of all finite dimensional subspaces of $\hat{\cal F}$ and by $\mathbf{K}$ the set of all finite sets of {\em independent} elementary d.o.f.. An element of $\hat{\mathbf{F}}$ will be usually denoted by $\hat{F}$, and an element of $\mathbf{K}$ by $K$.   

The start point of the construction will be a suitably chosen subset $\Lambda$ of $\hat{\mathbf{F}}\times\mathbf{K}$ equipped with a binary relation $\geq$ defining on it a structure of a directed set. The construction of the space of quantum states will consists of the following steps:
\begin{enumerate}
\item  we will show that it follows from properties of $(\Lambda,\geq)$ that if a pair of its elements satisfy $(\hat{F}',{K}')\geq(\hat{F},{K})$ then there exists a projection $\pr_{{K}{K}'}$ from $Q_{K'}$ onto $Q_K$ and for every triplet $(\hat{F}'',{K}'')\geq(\hat{F}',{K}')\geq(\hat{F},{K})$ the corresponding projections satisfy
\[
\pr_{KK''}=\pr_{KK'}\circ\pr_{K'K''}.
\]
\item next, we will show that, given a pair $\lambda'\equiv(\hat{F}',{K}')\geq\lambda\equiv(\hat{F},{K})$, there exists a distinguished injection $\omega_{\lambda'\lambda}:Q_K\to Q_{K'}$ such that 
\[
\pr_{KK'}\circ\omega_{\lambda'\lambda}={\rm id} 
\]
and
\[
\omega_{\lambda''\lambda}=\omega_{\lambda''\lambda'}\circ\omega_{\lambda'\lambda}
\]
for every triplet $\lambda''\geq\lambda'\geq\lambda$ of elements of $\Lambda$.
\item for every $\lambda\equiv(\hat{F},K)\in\Lambda$ we will define a Hilbert space $\h_\lambda$ as a space of functions on $Q_K$ square integrable with respect to a measure on $Q_K$. 
\item we will show that, given $\lambda'\equiv(\hat{F}',K')\geq\lambda\equiv(\hat{F},K)$, the projection $\pr_{KK'}$ and the injection $\omega_{\lambda'\lambda}$ define a projection $\pi_{\lambda\lambda'}$ from a space $\D_{\lambda'}$ of all density operators (i.e. positive operators of trace equal $1$) on $\h_{\lambda'}$ onto a space $\D_\lambda$ defined analogously. For every triplet $\lambda''\geq\lambda'\geq\lambda$ the corresponding projections will satisfy   
\[
\pi_{\lambda\lambda''}=\pi_{\lambda\lambda'}\circ\pi_{\lambda'\lambda''},
\]
which means that the set $\{\D_\lambda,\pi_{\lambda\lambda'}\}_{\lambda\in\Lambda}$ will be a projective family;

\item finally, the space of quantum states $\D$ will be defined as the projective limit of $\{\D_\lambda,\pi_{\lambda\lambda'}\}_{\lambda\in\Lambda}$. 
\end{enumerate}

\subsection{Additional assumptions \label{ad-as}}

To ensure that the construction outlined in the previous subsection will work we have to impose on the directed set $(\Lambda,\geq)$ some {\bf Assumptions}:
\begin{enumerate}
\item 
\begin{enumerate}
\item for each finite set $K_0$ of configurational elementary d.o.f.  there exists $(\hat{F},K)\in\Lambda$ such that each $\kappa\in K_0$ is a cylindrical function compatible with $K$; \label{k-Lambda}
\item for each finite set $F_0$ of momentum elementary d.o.f. there exists $(\hat{F},K)\in\Lambda$ such that $\hat{\varphi}\in\hat{F}$ for every $\varphi\in F_0$; \label{f-Lambda}
\end{enumerate}
\item \label{RN} 
if $(\hat{F},K)\in\Lambda$ then the image of the map $\tilde{K}$ given by \eqref{k-inj} is $\R^N$ (where $N$ is the number of elements of $K$)---in other words, $\tilde{K}$ is a bijection and consequently  
\[
Q_K\cong\R^N.
\] 
\item 
if $(\hat{F},K)\in\Lambda$, then 
\begin{enumerate}
\item for every $\hat{\varphi}\in \hat{\F}$ and for every cylindrical function $\Psi=\pr_K^*\psi$ compatible with $K=\{\kappa_1,\ldots,\kappa_N\}$ 
\[
\hat{\varphi}\Psi=\sum_{i=1}^N\Big(\pr^*_K\partial_{x_i}\psi\Big)\hat{\varphi}\kappa_i,
\]   
where $\{\partial_{x_i}\}$ are vector fields on $Q_K$ given by the global coordinate frame \eqref{lin-coor}; \label{comp-f} 
\item for every $\hat{\varphi}\in \hat{\F}$ and for every $\kappa\in K$ the cylindrical function $\hat{\varphi}\kappa$ is a real {\em constant} function on $Q$; \label{const}
\end{enumerate}
\item if $(\hat{F},K)\in\Lambda$ and $K=\{\kappa_{1},\ldots,\kappa_{N}\}$ then $\dim\hat{F}=N$; moreover, if $(\hat{\varphi}_1,\ldots,\hat{\varphi}_N)$ is a basis of $\hat{F}$ then an $N\times N$ matrix $G=(G_{ji})$ of components
\[
G_{ji}:=\hat{\varphi}_j\kappa_i
\]    
is {\em non-degenerate}. \label{non-deg}
\item  if $(\hat{F},K'),(\hat{F},K)\in\Lambda$ and $Q_{K'}=Q_{K}$ then  $(\hat{F},K')\geq(\hat{F},K)$; \label{Q'=Q} 
\item if $(\hat{F}',K')\geq(\hat{F},K)$ then 
\begin{enumerate}
\item each d.o.f. $K$ is {\em a linear combination} of d.o.f. in $K'$; \label{lin-comb}
\item $\hat{F}\subset\hat{F}'$. \label{FF'}
\end{enumerate} 
\end{enumerate} 

\subsection{Physical motivation for the construction}

The physical motivation for this construction is exactly the same as that presented in \cite{kpt}. According to it each elementary d.o.f. represents a measuring device by means of which one can extract some information about a point $(p,q)$ in the phase space. Consequently, a pair $(\hat{F},K)\in\Lambda$ represents a finite collection of such devices: $\{\varphi_1,\ldots,\varphi_N\}$ generating a basis $\{\hat{\varphi}_1,\ldots,\hat{\varphi}_N\}$ of $\hat{F}$ and $\{\kappa_1,\ldots,\kappa_N\}$ constituting $K$ and defining a reduced configuration space $Q_K$. Thus, given a point $(p,q)$ in the phase space, a pair $(\hat{F},K)$ provides us with a partial information about the point. Therefore a pair $(\hat{F},K)$ can be called a {\em reduced physical system} where the word ``reduced'' is used with respect to the complete system represented by the phase space $P\times Q$: uncountably many d.o.f. of the phase space are reduced to a finite number of d.o.f. defining $(\hat{F},K)$. Assumption \ref{non-deg} in cases when operators \eqref{hat-varphi} are given by a genuine Poisson bracket amounts to the non-degeneracy of the bracket on the corresponding reduced system or, equivalently, to the existence of a symplectic structure on the phase space of the reduced system.

The relation $\geq$ between points of $\Lambda$, that is, between reduced systems is meant to be a relation {\em system--its subsystem}: if $\lambda'\equiv(\hat{F}',K')\geq \lambda\equiv(\hat{F},K)$ then by virtue of Assumption \ref{FF'} the space $\hat{F}'$ representing momentum d.o.f. of the system $\la'$ contains the space $\hat{F}$ representing momentum d.o.f. of the system $\la$; on the other hand, by virtue of Assumption \ref{lin-comb} configurational elementary d.o.f. constituting $K'$ contain complete information about configurational elementary d.o.f. constituting $K$. Thus it is fully justified to call $\la$ a {\em subsystem} of $\la'$.

The Hilbert space $\h_\lambda$ introduced in Step 3 of the construction outline describes pure quantum states of the system $\la$ while the space $\D_\lambda$ introduced in Step 4 represents mixed states of the system. The task of the projection $\pi_{\la\la'}:D_{\la'}\to \D_\la$ is to reduce quantum d.o.f. of a system $\la'$ to quantum d.o.f. of its subsystem $\la$---as one can expect, such a projection will be defined by means of an appropriate partial trace. Finally, the projections $\{\pi_{\la\la'}\}$ are used to ``glue'' all spaces $\{\D_\la\}$ into a large space $\D$ representing all quantum states of the original theory.

\subsection{The construction}

We assume that for every set $K\in\mathbf{K}$ considered in this section there exists $\hat{F}\in\hat{\mathbf{F}}$ such that $(\hat{F},K)\in\Lambda$---this requirement will allow us to use all Assumptions listed in Section \ref{ad-as}. It does not mean, however, that the requirement is a necessary condition for every fact proven below.

\subsubsection{Step 1}

Let us recall that the goal of this step is to prove the existence of projections $p_{KK'}:Q_{K'}\to Q_K$ and to describe their properties.

It follows from Assumption \ref{RN} that on every reduced configuration space $Q_K$ under considerations the corresponding map $\tilde{K}$ given by \eqref{k-inj} defines a structure of a linear space which is the linear structure of $\R^N$ pulled back on $Q_K$. Then the coordinate frame \eqref{lin-coor} is linear.

\begin{lm}
Suppose that independent d.o.f in $K=\{\kappa_{1},\ldots,\kappa_{N}\}$ are linear combinations of  independent d.o.f. in $K'=\{\kappa'_1,\ldots,\kappa'_{N'}\}$. Then the linear combinations generate a linear projection $\pr_{KK'}:Q_{K'}\to Q_K$ such that
\[
\pr_K=\pr_{KK'}\circ\pr_{K'}.
\]
Moreover, if $N'=N$ then $Q_{K'}=Q_{K}$ and $\pr_{KK'}={\rm id}$.   
\label{lm-pr} 
\end{lm}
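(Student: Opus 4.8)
The plan is to reduce the statement to ordinary linear algebra by transporting everything through the coordinate identifications $\tilde K' \colon Q_{K'} \to \R^{N'}$ and $\tilde K \colon Q_K \to \R^N$, which by Assumption~\ref{RN} are linear bijections. First I would encode the hypothesis that each $\kappa_i$ is a linear combination of $\kappa'_1,\ldots,\kappa'_{N'}$ as $\kappa_i = \sum_{j=1}^{N'} a_{ij}\kappa'_j$ for real constants $a_{ij}$, collected into the $N\times N'$ matrix $A=(a_{ij})$. A preliminary observation I would record is that $\mathrm{rank}\,A = N$: if some nontrivial combination of the rows of $A$ vanished, the corresponding combination $\sum_i c_i\kappa_i$ would vanish identically on $Q$, forcing the image of $\tilde K$ into a proper linear subspace of $\R^N$ and contradicting independence of $K$ via Assumption~\ref{RN}. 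In particular $N\le N'$.

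Next I would define the projection outright as the composite $\pr_{KK'} := \tilde K^{-1}\circ L_A\circ \tilde K'$, where $L_A\colon \R^{N'}\to\R^N$, $L_A(y)=Ay$, is the linear map with matrix $A$. Being a composite of linear maps between the pulled-back linear structures, it is automatically linear, and it is surjective precisely because $\mathrm{rank}\,A=N$, so it genuinely projects onto $Q_K$. The identity $\pr_K=\pr_{KK'}\circ\pr_{K'}$ I would then verify by a direct chase on an arbitrary $q\in Q$: applying $\pr_{K'}$ and $\tilde K'$ sends $q$ to $(\kappa'_1(q),\ldots,\kappa'_{N'}(q))$ by \eqref{k-inj} and \eqref{pr-K}; applying $L_A$ and using $\kappa_i=\sum_j a_{ij}\kappa'_j$ turns this into $(\kappa_1(q),\ldots,\kappa_N(q))$; and $\tilde K^{-1}$ carries the latter precisely to $\pr_K(q)$.

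For the final clause, when $N'=N$ the matrix $A$ is square of full rank, hence invertible, so $L_A$ and therefore $\pr_{KK'}$ is a bijection. I would then show the two equivalence relations coincide: $q\sim_{K'}q'$ always implies $q\sim_K q'$ since the $\kappa_i$ are combinations of the $\kappa'_j$, while the converse follows by reading $A^{-1}$ as expressing each $\kappa'_j$ as a combination of the $\kappa_i$. Hence $\sim_K=\sim_{K'}$ and $Q_K=Q_{K'}$ as quotient sets, with matching differential structures by the compatibility assumption imposed in Section~\ref{pre}. The same coordinate chase as above, now with $\tilde K$ and $\tilde K'$ both defined on the common space, yields $\pr_{KK'}([q])=[q]$, i.e. $\pr_{KK'}=\id$.

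I expect the only genuine subtlety---the main obstacle---to be the rank bookkeeping: one must invoke independence of $K$ itself (not merely of $K'$) to exclude a degenerate $A$, and it is this single fact that simultaneously guarantees that $\pr_{KK'}$ is onto and, in the equal-dimension case, that $A$ is invertible. Everything else is the routine transport of the standard fact that a full-rank linear map descends to, and is completely determined by, the quotient data.
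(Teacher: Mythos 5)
Your proposal is correct and follows essentially the same route as the paper: both define $\pr_{KK'}$ by transporting the coefficient matrix through the coordinate bijections $\tilde{K},\tilde{K}'$ (i.e.\ $\pr_{KK'}=\tilde{K}^{-1}\circ L_A\circ\tilde{K}'$), verify $\pr_K=\pr_{KK'}\circ\pr_{K'}$ by a pointwise chase, and in the case $N'=N$ invert the matrix to show $\sim_K\,=\,\sim_{K'}$, hence $Q_K=Q_{K'}$ and $\pr_{KK'}=\id$. The only cosmetic difference is that you establish $\mathrm{rank}\,A=N$ (and hence surjectivity and, when $N'=N$, invertibility) by a direct argument from independence of $K$, whereas the paper deduces $N'\geq N$ and the invertibility of the matrix from the surjectivity of $\pr_{KK'}$, which it reads off from $\pr_K=\pr_{KK'}\circ\pr_{K'}$.
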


\begin{proof}
The assumption in the lemma  means that there exists a constant matrix $B=(B_i{}^{j})$ where $i=1,2,\ldots,N$ and $j=1,2,\ldots,N'$ such that 
\[
\kappa_i=B_i{}^{j}\kappa'_{j},
\]
where a summation over $j$ is assumed. Consequently,  
\begin{equation}
\kappa_i(q)=B_i{}^{j}\kappa'_{j}(q)
\label{kqk'q}
\end{equation}
for every $q\in Q$ and for every $i=1,\ldots,N$. This formula can be expressed in terms of the maps $\tilde{K}$ and $\tilde{K}'$ defined by \eqref{k-inj}:
\begin{equation}
\tilde{K}([q])=B\tilde{K}'([q]'),
\label{tk-btk'}
\end{equation}
where $[q]\in Q_K$ and $[q]'\in Q_{K'}$ and the symbols at the r.h.s. denote the standard  action of the matrix $B$ on the vector $\tilde{K}'([q]')$. Because $\tilde{K}$ is injective 
\begin{equation}
[q]=\tilde{K}^{-1}(B\tilde{K}'([q]'))
\label{qKqK'}
\end{equation}
for every $q\in Q$. This equation defines a map 
\begin{equation}
\pr_{KK'}:=\tilde{K}^{-1}(B\tilde{K}')
\label{pr-KK}
\end{equation}
from $Q_{K'}$ into $Q_K$. In other words,
\begin{equation}
\pr_{KK'}([q]')=[q].
\label{prq'q}
\end{equation}
Since $\tilde{K}'$ and $\tilde{K}$ define linear structures of, respectively, $Q_{K'}$ and $Q_K$ the map $\pr_{KK'}$ is {\em linear}. Moreover, because     \eqref{qKqK'} holds for every $q\in Q$ the map $\pr_{KK'}$ is a {\em projection}. 

Using maps $\pr_K$ and $\pr_{K'}$ defined by \eqref{pr-K} and the projection \eqref{pr-KK} we rewrite \eqref{qKqK'} obtaining
\begin{equation}
\pr_K=\pr_{KK'}\circ\pr_{K'}.
\label{ppKKp}
\end{equation}

Note that because $\pr_{KK'}$ is a linear projection then $N'=\dim Q_{K'}\geq\dim Q_{K}=N$. Suppose that $N=N'$. Then $\pr_{KK'}$ is a linear {\em isomorphism} and consequently the matrix $B$ in \eqref{kqk'q} is invertible. Hence for every $q\in Q$ and for every $j=1,\ldots,N'$  
\[
\kappa'_j(q)=B^{-1}_j{}^i\kappa_i(q).
\]
It follows from this equation and Equation \eqref{kqk'q} that $q\sim_K q'$ if and only if $q\sim_{K'}q'$ and therefore $Q_K=Q_{K'}$. Then $[q]=[q]'$. Setting this to \eqref{qKqK'} we obtain  
\[
[q]=\tilde{K}^{-1}(B\tilde{K}'([q])),
\]   
which means that $\pr_{KK'}={\rm id}$.
\end{proof}

Assumptions \ref{Q'=Q} and \ref{lin-comb}  together with Lemma \ref{lm-pr} guarantee that in the case of sets $K$ and $K'$ of independent d.o.f. such that $(\hat{F},K'),(\hat{F},K)\in\Lambda$ for some $\hat{F}\in\hat{\mathbf{F}}$ and $Q_K=Q_{K'}$ the projection $\pr_{KK'}$ exists and is the identity map on $Q_K=Q_{K'}$. Moreover, then the matrix $B$ is invertible and  Equation \eqref{tk-btk'} reads 
\[
\tilde{K}([q])=B\tilde{K}'([q]),
\]
which allows us to conclude that the linear structure defined on $Q_{K}=Q_{K'}$ by $\tilde{K}$ coincides with one defined by $\tilde{K}'$.

Let $K''=\{\kappa''_{1},\ldots,\kappa''_{N''}\}$, $K'=\{\kappa'_{1},\ldots,\kappa'_{N'}\}$ and $K=\{\kappa_{1},\ldots,\kappa_{N}\}$ be sets of independent d.o.f.. Suppose that $(i)$ every function in $K$ is a linear combination of the functions in $K'$, $(ii)$  every function in $K'$ is a linear combination of the functions in $K''$ and $(iii)$ every function in $K$ is a linear combination of the functions in $K''$. Thus there exist constant matrices $B=(B_i{}^{j})$, $C=(C_{j}{}^{l})$ and $D=(D_i{}^{l})$ ($i=1,\ldots, N$, $j=1,\ldots,N'$ and $l=1,\ldots,N''$)  such that
\begin{align*}
&\kappa_i=B_{i}{}^{j}\kappa'_{j},&&\kappa'_{j}=C_{j}{}^{l}\kappa''_{l},&&\kappa_i=D_{i}{}^{l}\kappa''_{l}.
\end{align*} 
Consequently,
\[
\kappa_i=D_{i}{}^{l}\kappa''_{l}=B_{i}{}^{j}C_{j}{}^{l}\kappa''_{l}
\]
or, equivalently,
\[
\tilde{K}([q])=D\tilde{K}''([q]'')=BC\tilde{K}''([q]''),
\]
for every $q\in Q$, where $[q]\in Q_K$ and $[q]''\in Q_{K''}$. This means that
\[
\tilde{K}^{-1}(D\tilde{K}'')=\tilde{K}^{-1}(BC\tilde{K}'')=\tilde{K}^{-1}\Big(B\tilde{K}'[\tilde{K}^{\prime-1}(C\tilde{K}'')]\Big).
\]
Using \eqref{pr-KK} we rewrite the formula above obtaining
\begin{equation}
\pr_{KK''}=\pr_{KK'}\circ\pr_{K'K''}.
\label{pr-KKK}
\end{equation}   

Note that so far in this step we used solely relations between sets of configurational elementary d.o.f. without any application operators in $\hat{\F}$. 

By virtue of Assumption \ref{lin-comb} 
\begin{enumerate}
\item for every pair $(\hat{F}',K')\geq (\hat{F},K)$ of elements of $\Lambda$ there exists a {\em linear projection} $\pr_{KK'}:Q_{K'}\to Q_{K}$ defined by \eqref{pr-KK};
\item for every triplet $(\hat{F}'',{K}'')\geq(\hat{F}',{K}')\geq(\hat{F},{K})$ of elements of $\Lambda$ the corresponding projections $\pr_{KK''}$, $\pr_{KK'}$ and $\pr_{K'K''}$ satisfy \eqref{pr-KKK}.  
\end{enumerate}

Regarding the physical meaning of the projection $\pr_{KK'}$ one can say that the projection reduces configurational d.o.f. of the space $Q_{K'}$ to those of $Q_K$.  

\subsubsection{Step 2}

In this step we will show that if $(\hat{F}',K')\geq (\hat{F},K)$ then there exists a natural injection $\omega_{K'K}:Q_K\to Q_{K'}$.   

Let us start by showing that for $K=\{\kappa_1,\ldots,\kappa_N\}\in\mathbf{K}$ by virtue of Assumptions \ref{comp-f} and \ref{const} there exists a natural linear map from $\hat{\cal F}$ into $Q_K$. Consider a cylindrical function $\Psi$ compatible with $K$. If $\hat{\varphi}\in\hat{\cal F}$ then by virtue of Assumption \ref{const} each cylindrical function $\hat{\varphi}\kappa_i$ is constant and therefore since now it will be treated as a {\em real number}. This conclusion together with Assumption \ref{comp-f} gives us
\begin{equation}
\hat{\varphi}\Psi=\pr_K^*\sum_i(\hat{\varphi}\kappa_i)\partial_{x_i}\psi,
\label{varphi-Psi}
\end{equation}
where $\{\partial_{x_i}\}$ are vector fields on $Q_K$ given by the linear coordinate frame $(x_i)$ on $Q_K$ defined by \eqref{lin-coor}. Since the coordinates $(x_i)$ are linear and $\{\hat{\varphi}\kappa_i\}$ are numbers the vector field $\sum_i(\hat{\varphi}\kappa_i)\partial_{x_i}$ is a {\em constant} vector field on the linear space $Q_K$. Thus we constructed a map
\begin{equation}
\hat{\varphi}\mapsto \vec{X}(\hat{\varphi}):=\sum_i(\hat{\varphi}\kappa_i)\partial_{x_i}
\label{varphi-vecX}
\end{equation}
from the space $\hat{\cal F}$ to the linear space of all constant vector fields on $Q_K$. Clearly, the map is {\em linear}.

On the other hand there is a natural one-to-one correspondence between constant vector fields on the linear space $Q_K$ and points of $Q_K$---given constant vector field $\vec{X}$ on $Q_K$, there exists a unique $[q]\in Q_K$ such that for every differentiable function $\psi$ on $Q_K$  
\begin{equation}
(\vec{X}\psi)([q_0])=\frac{d}{dt}\Big|_{t=0}\psi([q_0]+t[q]).
\label{vecX-psi}
\end{equation}
It is also clear that this correspondence
\begin{equation}
\vec{X}\mapsto [q]
\label{vecX-q}
\end{equation}
is a linear isomorphism from the space of all constant vector fields on $Q_K$ onto $Q_K$.

Composing the maps \eqref{varphi-vecX} and \eqref{vecX-q} we obtain a {\em linear} map which associates with each $\hat{\varphi}\in\hat{\cal F}$ a point in $Q_K$. We will denote this point by $[\hat{\varphi}]$. Taking into account \eqref{vecX-psi} we can express \eqref{varphi-Psi} in the following form
\begin{equation}
(\hat{\varphi}\Psi)(q)=\frac{d}{dt}\Big|_{t=0}\psi([q]+t[\hat{\varphi}]).
\label{varphi-varphi}
\end{equation}   
We can say now that, given $\hat{\varphi}\in\hat{\cal F}$, $[\hat{\varphi}]$ is the unique point in  $Q_K$ such that \eqref{varphi-varphi} holds for every $q\in Q$ and for every cylindrical function $\Psi$ compatible with $K$.      

The formula \eqref{varphi-varphi} and the linear map 
\begin{equation}
\hat{\cal F}\ni\hat{\varphi}\to[\hat{\varphi}]\in Q_K
\label{map-vphi-vphi}
\end{equation}
just introduced will serve as tools which will be used below to define injections $\{\omega_{\lambda'\lambda}\}$ and to prove their properties.  

Let $K'$ be a finite set of independent d.o.f.. Then there exists a map
\begin{equation}
\hat{\cal F}\ni\hat{\varphi}\to[\hat{\varphi}]'\in Q_{K'}
\label{map-vphi-vphi'}
\end{equation}
defined analogously to \eqref{map-vphi-vphi}. Suppose that each elementary d.o.f. in $K$ is a linear combination of elementary d.o.f. in $K'$ which means that there exists a projection $\pr_{KK'}:Q_{K'}\to Q_K$. What is then a relation between $[\hat{\varphi}]'$ and $[\hat{\varphi}]$? To answer the question consider a cylindrical function $\Psi$ compatible with $K$. Then by virtue of \eqref{Psi-cyl} and \eqref{ppKKp}     
\[
\Psi=\pr^*_K\psi=\pr^*_{K'}(\pr^*_{KK'}\psi).
\]
Note that $\pr^*_{KK'}\psi$ is a function on $Q_{K'}$ and therefore $\Psi$ is compatible with $K'$ also. Thus Equation \eqref{varphi-varphi} can be expressed in the following way:
\begin{multline*}
(\hat{\varphi}\Psi)(q)=\frac{d}{dt}\Big|_{t=0}(\pr^*_{KK'}\psi)([q]'+t[\hat{\varphi}]')=\frac{d}{dt}\Big|_{t=0}\psi\Big(\pr_{KK'}([q]')+t\,\pr_{KK'}([\hat{\varphi}]')\Big)=\\=\frac{d}{dt}\Big|_{t=0}\psi\Big([q]+t\,\pr_{KK'}([\hat{\varphi}]')\Big),
\end{multline*}
where in the last step we used \eqref{prq'q}. Comparing this result with \eqref{varphi-varphi} we see that
\begin{equation}
\pr_{KK'}([\hat{\varphi}]')=[\hat{\varphi}].
\label{vphi'-vphi}
\end{equation}

Let us fix $\lambda\equiv(\hat{F},K)\in\Lambda$ and restrict the map \eqref{map-vphi-vphi} to $\hat{F}$. An important observation is that under Assumption \ref{non-deg} the restriction is a linear isomorphism. Indeed, let $(\hat{\varphi}_1,\ldots,\hat{\varphi}_N)$ be a basis of $\hat{F}$ and consider the action of the map \eqref{varphi-vecX} on the basis:
\[
\hat{\varphi}_j\mapsto \vec{X}(\hat{\varphi}_j)=\sum_{i=1}^N(\hat{\varphi}_j\kappa_i)\partial_{x_i}.
\]
By virtue of Assumption \ref{non-deg} $(\hat{\varphi}_j\kappa_i)=(G_{ji})$ is a non-degenerate matrix which means that the map \eqref{varphi-vecX} restricted to $\hat{F}$ is a linear isomorphism. Consequently, the map \eqref{map-vphi-vphi} restricted to  $\hat{F}$ is a linear isomorphism also. This means in particular that
\begin{equation}
[\hat{F}]=Q_K,
\label{F=QK}
\end{equation}
where $[\hat{F}]$ is the image of $\hat{F}$ under the map \eqref{map-vphi-vphi}.
  
Consider now an element $\lambda'\equiv(\hat{F}',K')\geq \lambda\equiv(\hat{F},K)$ and denote by $[\hat{F}]'\subset Q_{K'}$ the image of $\hat{F}$ under the map \eqref{map-vphi-vphi'}. Assume that for some $\hat{\varphi}\in\hat{F}$ the point $[\hat{\varphi}]'$ belongs to the kernel of ${\pr_{KK'}}$. Then by virtue of \eqref{vphi'-vphi}
\[
0=\pr_{KK'}([\hat{\varphi}]')=[\hat{\varphi}]\in Q_K.
\]         
But because  $\hat{\varphi}\in \hat{F}$ and, as shown above, $\hat{F}$ is linearly isomorphic to $Q_K$ we conclude that $\hat{\varphi}=0$ and therefore $[\hat{\varphi}]'=0$. Thus
\begin{equation}
\ker\pr_{KK'}\cap [\hat{F}]'=\{0\}.
\label{ker-F'}
\end{equation}
Consequently, $\pr_{KK'}$ restricted to $[\hat{F}]'$ is {\em injective}. It is also surjective---by virtue of \eqref{vphi'-vphi} and \eqref{F=QK}
\[
\pr_{KK'}([\hat{F}]')=[F]=Q_K.
\]
Thus $\pr_{KK'}$ restricted to $[\hat{F}]'\subset Q_{K'}$ is a linear isomorphism onto $Q_K$. 

Let us define then the desired map $\omega_{\lambda'\lambda}:Q_K\to Q_{K'}$ as the inverse map to $\pr_{KK'}$ restricted to $[\hat{F}]'$: 
\[
\omega_{\lambda'\lambda}:=\Big(\pr_{KK'}\Big|_{[\hat{F}]'}\Big)^{-1}
\] 
or, equivalently,
\begin{equation}
Q_K\ni[\hat{\varphi}]\mapsto\omega_{\lambda'\lambda}([\hat{\varphi}]):=[\hat{\varphi}]'\in Q_{K'}.
\label{omega}
\end{equation}
Obviously, the map $\omega_{\lambda'\lambda}$ is linear and injective and satisfies
\[
\pr_{KK'}\circ\omega_{\lambda'\lambda}={\rm id}.
\] 

Note that 
\[
\dim Q_{K'}=\dim\ker\pr_{KK'}+\dim Q_K=\dim\ker\pr_{KK'}+\dim [\hat{F}]'.
\]
Taking into account  \eqref{ker-F'} we conclude that
\begin{equation}
Q_{K'}=\ker\pr_{KK'}\oplus[\hat{F}]'=\ker\pr_{KK'}\oplus \omega_{\lambda'\lambda}(Q_K).
\label{QK'-dec}
\end{equation}

Consider now a triplet $\lambda''\equiv(\hat{F}'',K'')\geq\lambda'\equiv(\hat{F}',K')\geq \lambda\equiv(\hat{F},K)$ and the composition $\omega_{\lambda''\lambda'}\circ\omega_{\lambda'\lambda}$. The value of 
\[
(\omega_{\lambda''\lambda'}\circ\omega_{\lambda'\lambda})([\hat{\varphi}])
\]
for $\hat{\varphi}\in \hat{F}$ can be found as follows. First the map $\omega_{\lambda'\lambda}$ maps $[\hat{\varphi}]\in Q_{K}$ to $[\hat{\varphi}]'\in Q_{K'}$. Next, we find $\hat{\varphi}'\in\hat{F}'$ such that $[\hat{\varphi}']'=[\hat{\varphi}]'$. Then $\omega_{\lambda''\lambda'}$ maps $[\hat{\varphi}']'\in Q_{K'}$ to $[\hat{\varphi}']''\in Q_{K''}$. This mappings can be described as follows
\[
Q_{K}\ni [\hat{\varphi}]\mapsto [\hat{\varphi}]'=[\hat{\varphi}']'\mapsto[\hat{\varphi}']''=(\omega_{\lambda''\lambda'}\circ\omega_{\lambda'\lambda})([\hat{\varphi}])\in Q_{K''}.
\]       
Note now that due to Assumption \ref{FF'} $\hat{F}\subset\hat{F}'$ which means that $\hat{\varphi}'=\hat{\varphi}$ and consequently
\[
(\omega_{\lambda''\lambda'}\circ\omega_{\lambda'\lambda})([\hat{\varphi}])=[\hat{\varphi}]''=\omega_{\lambda''\lambda}([\hat{\varphi}]).
\]  
Thus
\begin{equation}
\omega_{\lambda''\lambda}=\omega_{\lambda''\lambda'}\circ\omega_{\lambda'\lambda}.
\label{omomom}
\end{equation}
  
Let us finally emphasize that, given a pair $\lambda'\equiv(\hat{F}',K')\geq \lambda\equiv(\hat{F},K)$, we denoted the corresponding injection by $\omega_{\lambda'\lambda}$ despite the fact that the space $\hat{F}'$ was not used to construct the map---necessary and sufficient ingredients for the construction of $\omega_{\lambda'\lambda}$ are sets $(\hat{F},K)$ satisfying Assumptions \ref{const} and \ref{non-deg} and a set $K'$ of independent d.o.f. such that each elementary d.o.f. in $K$ is a linear combination of elementary d.o.f. in $K'$. Note also that Assumption \ref{FF'} was necessary merely for proving the property \eqref{omomom}.

\subsubsection{Step 3}

In this step we will define a Hilbert space $\h_\lambda$ associated with $\lambda\equiv(\hat{F},K)\in\Lambda$. 

Given reduced configuration space $Q_K$ generated by $K\in\lambda$, due to Assumption \ref{RN} there exists on it a distinguished linear coordinate frame $(x_i)$ defined by \eqref{lin-coor}. This frame defines a measure on $Q_K$ 
\begin{equation}
d\mu_{\lambda}:=dx_1\ldots dx_N,
\label{dmu-la}
\end{equation}
where $N=\dim Q_K$. In other words, $d\mu_\lambda$ is the Lebesgue measure on $\R^N$ pulled back by the map $\tilde{K}$ (see \eqref{k-inj}) to $Q_K$. {Note that $Q_K$ as a finite dimensional linear space is an Abelian Lie group and $d\mu_{\lambda}$ being invariant with respect to translations on $Q_K$ is a Haar measure on $Q_K$.}  

Let us define a Hilbert space $\h_\lambda$ as a space of complex functions on $Q_K$ square integrable with respect to the measure $d\mu_\lambda$,
\begin{equation}
\h_\lambda:=L^2(Q_K,d\mu_\lambda).
\label{h-la}
\end{equation}

Of course, neither the measure $d\mu_\lambda$ nor the Hilbert space $\h_\lambda$ depends on $\hat{F}\in\lambda$, so it would be perhaps more natural to denote them by, respectively, $d\mu_K$ and $\h_K$, but the chosen symbols $d\mu_\lambda$ and $\h_\lambda$ will be more convenient for further applications.

As mentioned already, the space $\h_\la$ represents pure quantum states of the system $\la$.

\subsubsection{Step 4}

Denote by $\D_\lambda$ the space of all density operators (i.e. positive operators of trace equal $1$) on the Hilbert space $\h_\lambda$. The goal of this step is to define  the projection $\pi_{\lambda'\lambda}:\D_{\lambda'}\to \D_\lambda$ for every $\lambda'\geq\lambda$.

Assume then that $\la'\equiv(\hat{F}',K')\geq\la\equiv(\hat{F},K)$. Let us recall that the task of the projection $\pi_{\la\la'}$ is to reduce quantum d.o.f. of the system $\la'$ to those of its subsystem $\la$. Thus we will decompose the Hilbert space $\h_{\la'}$ into a tensor product of a Hilbert space $\tilde{\h}_{\la'\la}$ describing quantum d.o.f. which should be reduced and a Hilbert space $\h_{\la'\la}$ corresponding to quantum d.o.f. of the subsystem $\la$:
\begin{equation}
\h_{\la'}=\tilde{\h}_{\la'\la}\ot \h_{\la'\la}
\label{hhh-0}
\end{equation}
and then will define the projection $\pi_{\la\la'}$ by means of the partial trace with respect to the Hilbert space $\tilde{\h}_{\la'\la}$.  

A natural way to obtain such a decomposition of $\h_{\la'}$ is to derive it from a decomposition of the space $Q_{K'}$ which underlies the Hilbert space (see \eqref{h-la}). We choose the following decomposition 
\begin{equation}
Q_{K'}=\ker\pr_{KK'}\oplus Q_{\la'\la}
\label{QK'-dec-1}
\end{equation}
and will define $\tilde{\h}_{\la'\la}$ as a Hilbert space of functions on $\ker\pr_{KK'}$ and $\h_{\la'\la}$ as a Hilbert space of functions on the linear space $Q_{\la'\la}$. To justify this choice of the decomposition let us recall that $\pr_{KK'}$ reduces the configurational d.o.f. of $Q_{K'}$ to those of $Q_K$ hence $\ker\pr_{KK'}$ is constituted from the configurational d.o.f. which undergo to the reduction. On the other hand, $\pr_{KK'}$ restricted to $Q_{\la'\la}$ is injective which means that $Q_{\la'\la}$ contains those configurational d.o.f. which survive the reduction.

Note however that the decomposition \eqref{QK'-dec-1} is not unique i.e. the space $Q_{\la'\la}$ in \eqref{QK'-dec-1} is not unique. As it was shown in \cite{oko-ncomp} projections $\{\pi_{\la\la'}\}$ constructed according to the prescription just presented  do depend on the choice of $Q_{\la'\la}$. This may seem to be a problem but in fact it is not---the passage from the system $\la'$ to its subsystem $\la$ involves a reduction of {\em both} momentum and configurational d.o.f of $\la'$ and the decomposition \eqref{QK'-dec-1} takes into account only the reduction of the configurational ones. So the freedom to choose $Q_{\la'\la}$ corresponds to the freedom to choose which linear subspace of $\hat{F}'$ is the space $\hat{F}$ of momentum d.o.f. of the subsystem $\la$. This means that to construct the projections $\{\pi_{\la\la'}\}$ we have to find a space $Q_{\la'\la}$ corresponding to $\hat{F}$.

It seems that a perfect candidate to serve as the space $Q_{\la'\la}$ corresponding to $\hat{F}$ is the image of the injection $\omega_{\la'\la}$---recall that the image corresponds naturally to $\hat{F}$ and that it satisfies \eqref{QK'-dec} which is a particular case of the decomposition \eqref{QK'-dec-1}. Thus we set 
\[
Q_{\la'\la}=\omega_{\la'\la}(Q_K)
\]
and rewrite the decomposition \eqref{QK'-dec-1} in the following form
\begin{equation}
\begin{array}{ccccc}
Q_{K'}&=&\ker \pr_{KK'}&\oplus&\omega_{\la'\la}(Q_K)\medskip\\
&&&&\Big\uparrow\vcenter{\rlap{$\scriptstyle{\omega_{\la'\la}}$}}\medskip\\
&&&&Q_K
\end{array}\ \ \,.
\label{QK'-dec->}
\end{equation}

We define now the Hilbert spaces
\begin{align}
&\tl{\h}_{\la'\la}:=L^2(\ker\pr_{KK'},d\tl{\mu}_{\la'\la}),&& \h_{\la'\la}:=L^2(\omega_{\la'\la}(Q_K),d\mu_{\la'\la}),
\label{hh}
\end{align}
where the measure $d\mu_{\la'\la}$ on $\omega_{\la'\la}(Q_K)$ is given by the push-forward
\begin{equation}
d\mu_{\la'\la}:=\omega_{\la'\la*}d\mu_{\la}.
\label{mu-ll}
\end{equation}
To define the measure $d\tl{\mu}_{\la'\la}$ on $\ker\pr_{KK'}$ we need to take into account that in some cases $\ker\pr_{KK'}$ is zero-dimensional i.e. $\ker\pr_{KK'}=\{0\}$, where $0$ is the zero of the vector space $Q_{K'}$ and to define first what is a measure on such a degenerate space. If $\ker\pr_{KK'}=\{0\}$ then one can define a measure $d\mu$ on it requiring that for every complex function $f$ on $\ker\pr_{KK'}$ 
\[
\int_{\ker\pr_{KK'}}f\,d\mu:=f(0)\,\xi\in\C,
\]
where $\xi$ is a {\em real positive} number {\em independent} of $f$ (one can say that the measure $d\mu$ is just the number $\xi$). Then the Hilbert space $L^2(\ker\pr_{KK'},d{\mu})$ is naturally isomorphic to a Hilbert space $\C$ of complex numbers equipped with the scalar product
\[
\C^2\ni(z,z')\mapsto\scal{z}{z'}:=\bar{z}z'\xi\in\C.
\]         

Now it is easy to see that there exists a unique measure $d\tl{\mu}_{\la'\la}$ on  $\ker\pr_{KK'}$ such that $d\mu_{\la'}=d\tl{\mu}_{\la'\la}\times d\mu_{\la'\la}$. Thus we have obtained the following diagram  
\begin{equation}
\begin{array}{ccccc}
d\mu_{\la'}&=&d\tl{\mu}_{\la'\la}&\times&d\mu_{\la'\la}\medskip\\
&&&&\Big\uparrow\vcenter{\rlap{$\scriptstyle{\omega_{\la'\la*}}$}}\medskip\\
&&&& d\mu_{\la}
\end{array}\ \ \,,
\label{mmm}
\end{equation}
which corresponds to \eqref{QK'-dec->}. 

It follows from the formulae \eqref{hh} and \eqref{mu-ll} defining, respectively, the Hilbert space $\h_{\la'\la}$ and the measure $d\mu_{\la'\la}$ that the following map 
\begin{equation}
\h_{\la'\la}\ni\Psi\mapsto U_{\la'\la}\Psi:=\omega_{\la'\la}^{*}\Psi\in\h_\la,
\label{Ull}
\end{equation}
is unitary and therefore it can be used to identify the Hilbert spaces $\h_{\la'\la}$ and $\h_\la$.

Now we can rewrite the decomposition \eqref{hhh-0} in the following form
\begin{equation}
\begin{array}{ccccc}
\h_{\la'}&=&\tl{\h}_{\la'\la}&\ot&\h_{\la'\la}\medskip\\
&&&&\Big\downarrow\vcenter{\rlap{$\scriptstyle{U_{\la'\la}=\,\omega^{*}_{\la'\la}}$}}\medskip\\
&&&& \h_\la
\end{array}\ \ \ \ \ \ \ \ \,.
\label{hhh}
\end{equation}
which corresponds to the diagrams \eqref{QK'-dec->} and \eqref{mmm}.

Now we are able to define the projection $\pi_{\la'\la}$. Given $\rho\in\D_{\la'}$, we act on it by an operator $\tl{\tr}_{\la'\la}$ of the partial trace with respect to the Hilbert space $\tl{\h}_{\la'\la}$ obtaining thereby a density operator $\tr_{\la'\la}\rho$ on the Hilbert space $\h_{\la'\la}$. Then we map the resulting density operator to a density operator on $\h_\la$ by an isomorphism $u_{\la'\la}$ from the algebra ${\cal B}_{\la'\la}$ of bounded operators on $\h_{\la'\la}$ onto the algebra ${\cal B}_\la$ of bounded operators on $\h_\la$ given by
\begin{equation}
{\cal B}_{\la'\la}\ni \alpha \mapsto u_{\la'\la}\alpha:=U_{\la'\la}\circ \alpha \circ U^{-1}_{\la'\la}\in{\cal B}_\la.
\label{ull}
\end{equation}
Thus the composition of $\tl{\tr}_{\la'\la}$ and $u_{\la'\la}$ projects density operators in $D_{\la'}$ to ones in $D_\la$:       
\begin{df}
Given $\la'\geq\la$, the projection $\pi_{\la\la'}:\D_{\la'}\rightarrow\D_\la$ is defined by the following formula:
\[
\pi_{\la\la'}:=
u_{\la'\la}\circ\tl{\tr}_{\la'\la}. 
\]
\label{forget}
\end{df}

\begin{pro}
For every triplet $\la''\geq\la'\geq\la$ of elements of $\Lambda$ 
\begin{equation}
\pi_{\la\la''}=\pi_{\la\la'}\circ\pi_{\la'\la''}.
\label{pipipi}
\end{equation}
\end{pro}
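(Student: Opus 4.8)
The plan is to exhibit both sides of \eqref{pipipi} as one and the same map on $\D_{\la''}$. The two composite operations rest on two different bipartite tensor decompositions of $\h_{\la''}$, so the strategy is to refine both of them into a common \emph{tripartite} decomposition and then to invoke two elementary properties of the partial trace: that tracing out one factor and then a second factor equals tracing out their tensor product (the tower property), and that the partial trace intertwines with factorised unitaries (covariance).

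First I would build the tripartite decomposition at the level of configuration spaces. Applying the linear isomorphism $\omega_{\la''\la'}$ to the splitting $Q_{K'}=\ker\pr_{KK'}\oplus\omega_{\la'\la}(Q_K)$ from \eqref{QK'-dec} and using $\omega_{\la''\la'}\circ\omega_{\la'\la}=\omega_{\la''\la}$ of \eqref{omomom} gives $\omega_{\la''\la'}(Q_{K'})=\omega_{\la''\la'}(\ker\pr_{KK'})\oplus\omega_{\la''\la}(Q_K)$; combined with $Q_{K''}=\ker\pr_{K'K''}\oplus\omega_{\la''\la'}(Q_{K'})$ this yields
\[
Q_{K''}=\ker\pr_{K'K''}\oplus\omega_{\la''\la'}(\ker\pr_{KK'})\oplus\omega_{\la''\la}(Q_K).
\]
The crux of the whole argument — and what I expect to be the main obstacle — is to verify that the first two summands reassemble precisely to $\ker\pr_{KK''}$. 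From $\pr_{KK''}=\pr_{KK'}\circ\pr_{K'K''}$ of \eqref{pr-KKK} and $\pr_{K'K''}\circ\omega_{\la''\la'}=\id$ both summands lie in $\ker\pr_{KK''}$, their sum is direct, and their dimensions add to $\dim Q_{K''}-\dim Q_K=\dim\ker\pr_{KK''}$, forcing equality. Hence this single splitting refines both $Q_{K''}=\ker\pr_{KK''}\oplus\omega_{\la''\la}(Q_K)$ and $Q_{K''}=\ker\pr_{K'K''}\oplus\omega_{\la''\la'}(Q_{K'})$.

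Next I would promote this to Hilbert spaces. Writing $H_1:=\tl{\h}_{\la''\la'}=L^2(\ker\pr_{K'K''})$, $H_2:=L^2(\omega_{\la''\la'}(\ker\pr_{KK'}))$ and $H_3:=\h_{\la''\la}=L^2(\omega_{\la''\la}(Q_K))$, the factorisation of the measures — obtained by pushing $d\mu_{\la'}=d\tl{\mu}_{\la'\la}\times d\mu_{\la'\la}$ forward by $\omega_{\la''\la'}$ and using \eqref{omomom} and \eqref{mu-ll}, which in particular gives $\omega_{\la''\la'*}d\mu_{\la'\la}=d\mu_{\la''\la}$ — turns the refined splitting into $\h_{\la''}=H_1\ot H_2\ot H_3$, with $\tl{\h}_{\la''\la}=H_1\ot H_2$, $\h_{\la''\la}=H_3$, $\tl{\h}_{\la''\la'}=H_1$ and $\h_{\la''\la'}=H_2\ot H_3$. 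Because $\omega_{\la''\la'}$ carries the two summands of $Q_{K'}$ onto the corresponding summands of $\omega_{\la''\la'}(Q_{K'})$, the unitary $U_{\la''\la'}=\omega^*_{\la''\la'}$ of \eqref{Ull} factorises as $U_{\la''\la'}=V_2\ot V_3$ along $H_2\ot H_3\to\tl{\h}_{\la'\la}\ot\h_{\la'\la}$, and a short computation using \eqref{omomom} gives the intertwining relation $U_{\la'\la}\circ V_3=U_{\la''\la}$.

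Finally I would carry out the operator computation, all maps being as in Definition \ref{forget}. For $\rho\in\D_{\la''}$ the left side is $\pi_{\la\la''}\rho=U_{\la''\la}\big(\tl{\tr}_{H_1\ot H_2}\rho\big)U_{\la''\la}^{-1}$, since $\tl{\tr}_{\la''\la}=\tl{\tr}_{H_1\ot H_2}$. On the right side $\pi_{\la'\la''}\rho=U_{\la''\la'}\big(\tl{\tr}_{H_1}\rho\big)U_{\la''\la'}^{-1}$, and applying $\pi_{\la\la'}$ traces out $\tl{\h}_{\la'\la}$ and conjugates by $U_{\la'\la}$. Using covariance, $\tl{\tr}_{\la'\la}\big[(V_2\ot V_3)\,\alpha\,(V_2\ot V_3)^{-1}\big]=V_3\big(\tl{\tr}_{H_2}\alpha\big)V_3^{-1}$ with $\alpha=\tl{\tr}_{H_1}\rho$, and then the tower property $\tl{\tr}_{H_2}\circ\tl{\tr}_{H_1}=\tl{\tr}_{H_1\ot H_2}$, the composite becomes $U_{\la'\la}V_3\big(\tl{\tr}_{H_1\ot H_2}\rho\big)V_3^{-1}U_{\la'\la}^{-1}$, which by $U_{\la'\la}\circ V_3=U_{\la''\la}$ equals exactly $\pi_{\la\la''}\rho$. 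Thus, once the geometric and measure-theoretic bookkeeping of the second and third steps is in place, \eqref{pipipi} follows purely from the tower and covariance properties of the partial trace.
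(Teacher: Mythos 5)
Your proof is correct, and it rests on the same geometric skeleton as the paper's own proof, but it executes the operator-level part by a genuinely different route. Both arguments hinge on the tripartite splitting $Q_{K''}=\ker\pr_{K'K''}\oplus\omega_{\la''\la'}(\ker\pr_{KK'})\oplus\omega_{\la''\la}(Q_K)$, on the identity $\ker\pr_{K'K''}\oplus\omega_{\la''\la'}(\ker\pr_{KK'})=\ker\pr_{KK''}$ (the paper's \eqref{kerrr}), and on the measure factorization $d\tl{\mu}_{\la''\la'}\times\omega_{\la''\la'*}d\tl{\mu}_{\la'\la}=d\tl{\mu}_{\la''\la}$ (the paper's \eqref{mumumu}); for the first of these you give a dimension-counting argument while the paper decomposes an arbitrary $\bar{a}\in\ker\pr_{KK''}$ directly via \eqref{pr-KKK} and \eqref{QK'-dec}, both being valid. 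The real divergence comes afterwards: the paper works with integral kernels, writing the partial trace as $(\tl{\tr}_{\la'\la}\check{\rho})(b',b)=\int\check{\rho}(a+b',a+b)\,d\tl{\mu}_{\la'\la}(a)$, applying this formula twice, changing variables by $\omega_{\la''\la'}$, and merging the resulting double integral into a single integral over $\ker\pr_{KK''}$; you instead lift the splitting to a tensor factorization $\h_{\la''}=H_1\ot H_2\ot H_3$, factorize $U_{\la''\la'}=V_2\ot V_3$, verify the intertwining relation $U_{\la'\la}\circ V_3=U_{\la''\la}$ (which is \eqref{omomom} read through pullbacks), and finish with the tower property and unitary covariance of the abstract partial trace. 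Each approach has its merits: the paper's computation is concrete and self-contained, matching its kernel-style description of $\tl{\tr}_{\la'\la}$, while yours avoids manipulating kernels of general density operators (these are defined only almost everywhere, so the diagonal-type restriction in the kernel formula requires some care) and makes transparent that, once the geometry and the measure bookkeeping are settled, \eqref{pipipi} follows from two completely standard partial-trace identities. If you write this up, make explicit that the identification $\tl{\h}_{\la''\la}=H_1\ot H_2$ is not a consequence of the vector-space splitting alone --- it requires exactly the measure identity \eqref{mumumu}, which you do obtain, as the paper does, from the uniqueness of the factorization of $d\mu_{\la''}$; stating that equality separately closes the only brisk spot in your argument.
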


\begin{proof}
It will be convenient to describe the partial trace $\tl{\tr}_{\la'\la}$ using the integral kernel $\check{\rho}$ of $\rho$ being a function on $Q^2_{K'}$ square integrable with respect to the measure $d\mu_{\la'}\times d\mu_{\la'}$. The kernel of the resulting operator $\tl{\tr}_{\la'\la}\rho$ is a function on $(\omega_{\la'\la}(Q_K))^2$ and reads
\[
(\tl{\tr}_{\la'\la}\check{\rho})(b',b):=\int_{\ker\pr_{KK'}} \check{\rho}(a+b',a+b)\,\,d\tl{\mu}_{\la'\la}(a),
\]  
where $a\in\ker\pr_{KK'}$ and $b',b\in\omega_{\la'\la}(Q_K)$. To obtain the kernel of the density operator $\pi_{\la\la'}\rho$ we have to pull back the kernel just obtained to $Q^2_K$ by means of the injection  $\omega_{\la'\la}$---recall that this injection defines via \eqref{Ull} the unitary map $U_{\la'\la}$ which on the other hand defines via \eqref{ull} the isomorphism $u_{\la'\la}$ mapping $\tl{\tr}_{\la'\la}\rho$ to $\pi_{\la\la'}\rho$. Thus the kernel of $\pi_{\la\la'}\rho$ can be expressed as
\begin{multline}
(\pi_{\la\la'}\check{\rho})(b',b)=(\tl{\tr}_{\la'\la}\check{\rho})\Big(\omega_{\la'\la}(b'),\omega_{\la'\la}(b)\Big)=\\=\int_{\ker\pr_{KK'}} \check{\rho}\Big(a+\omega_{\la'\la}(b'),a+\omega_{\la\la'}(b)\Big)\,\,d\tl{\mu}_{\la'\la}(a),
\label{pi-ker-r}
\end{multline}
where $a\in\ker\pr_{KK'}$ and $b,b'\in Q_K$. 

Let $\la''\geq\la'\geq\la$. Then using twice Equation \eqref{pi-ker-r} we obtain the following formula describing the kernel of $\pi_{\la\la'}\pi_{\la'\la''}\rho$:  
\begin{multline*}
(\pi_{\la\la'}\pi_{\la'\la''}\check{\rho})(c',c)=\\=\int\limits_{\substack{\ker\\\pr_{KK'}}}\Big[\int\limits_{\substack{\ker\\\pr_{K'K''}}}\check{\rho}\Big(a+\omega_{\la''\la'}(b+\omega_{\la'\la}(c')),a+\omega_{\la''\la'}(b+\omega_{\la'\la}(c))\Big)\,\,d\tl{\mu}_{\la''\la'}(a)\,\Big]\,\,d\tl{\mu}_{\la'\la}(b),
\end{multline*}
where $a\in\ker\pr_{K'K''}$, $b\in \ker\pr_{KK'}$ and $c',c\in Q_K$. By virtue of linearity of $\omega_{\la''\la'}$ and the property \eqref{omomom} we simplify the formula above obtaining:
\begin{multline*}
(\pi_{\la\la'}\pi_{\la'\la''}\check{\rho})(c',c)=\\=\int\limits_{\substack{\ker\\\pr_{KK'}}}\Big[\int\limits_{\substack{\ker\\\pr_{K'K''}}}\check{\rho}\Big(a+\omega_{\la''\la'}(b)+\omega_{\la''\la}(c'),a+\omega_{\la''\la'}(b)+\omega_{\la''\la}(c)\Big)\,\,d\tl{\mu}_{\la''\la'}(a)\,\Big]\,\,d\tl{\mu}_{\la'\la}(b).
\end{multline*} 
Let us change in this integral the variable $b$ to
\[
\bar{b}:=\omega_{\la''\la'}(b)\in \omega_{\la''\la'}({\ker\pr_{KK'}})\subset Q_{K''}.
\]
Then 
\begin{multline}
(\pi_{\la\la'}\pi_{\la'\la''}\check{\rho})(c',c)=\\=\int\limits_{\omega_{\la''\la'}(\ker\pr_{KK'})}\Big[\int\limits_{\ker\pr_{K'K''}}\check{\rho}\Big(a+\bar{b}+\omega_{\la''\la}(c'),a+\bar{b}+\omega_{\la''\la}(c)\Big)\,\,d\tl{\mu}_{\la''\la'}(a)\,\Big]\,\,d\tl{\mu}_{\la''\la'\la}(\bar{b}),
\label{pipirho}
\end{multline}
where     
\begin{equation}
d\tl{\mu}_{\la''\la'\la}:=\omega_{\la''\la'*}d\tl{\mu}_{\la'\la}
\label{mu-lll}
\end{equation}
is a measure on $\omega_{\la''\la'}({\ker\pr_{KK'}})\subset Q_{K''}$.

Note now that domains of both integrals in the formula \eqref{pipirho} are linear subspaces of $Q_{K''}$. Applying the decomposition \eqref{QK'-dec} first to $Q_{K''}$ and then to $Q_K'$ we obtain
\begin{multline*}
Q_{K''}=\ker\pr_{K'K''}\oplus \omega_{\la''\la'}\Big(\ker\pr_{KK'}\oplus\omega_{\la'\la}(Q_K)\Big)=\\=\ker\pr_{K'K''}\oplus \omega_{\la''\la'}(\ker\pr_{KK'})\oplus\omega_{\la''\la}(Q_K),
\end{multline*}
where in the last step we used \eqref{omomom}. Suppose now that $\bar{a}\in \ker\pr_{KK''}\subset Q_{K''}$. Due to \eqref{pr-KKK} 
\begin{equation}
\bar{a}=a+\bar{b},
\label{a=a+b}
\end{equation}
where  $a\in\ker\pr_{K'K''}$ and $\pr_{K'K''}(\bar{b})\in\ker\pr_{KK'}$ or, equivalently, $\bar{b}\in\omega_{\la''\la}(\ker\pr_{KK'})$. Therefore
\begin{equation}
\ker\pr_{K'K''}\oplus \omega_{\la''\la'}(\ker\pr_{KK'})=\ker\pr_{KK''}.
\label{kerrr}
\end{equation}
On the other hand by virtue of \eqref{mmm} applied to $d\mu_{\la''}$ and $d\mu_{\la'}$ 
\begin{multline*}
d\mu_{\la''}=d\tl{\mu}_{\la''\la'}\times \omega_{\la''\la'*}(d\mu_{\la'})=d\tl{\mu}_{\la''\la'}\times \omega_{\la''\la'*}\Big(d\tl{\mu}_{\la'\la}\times \omega_{\la'\la*}(d\mu_{\la})\Big)=\\=d\tl{\mu}_{\la''\la'}\times \omega_{\la''\la'*}(d\tl{\mu}_{\la'\la})\times \omega_{\la''\la*}(d\mu_{\la})=d\tl{\mu}_{\la''\la'}\times d\tl{\mu}_{\la''\la'\la}\times \omega_{\la''\la*}(d\mu_{\la}),
\end{multline*}
where we used \eqref{omomom} and \eqref{mu-lll}. The decomposition \eqref{mmm} allows us to conclude that
\begin{equation}
d\tl{\mu}_{\la''\la'}\times d\tl{\mu}_{\la''\la'\la}=d\tl{\mu}_{\la''\la}.
\label{mumumu}
\end{equation}

Using the results \eqref{a=a+b}, \eqref{kerrr} and \eqref{mumumu} we rewrite the integral \eqref{pipirho} obtaining
\begin{equation*}
(\pi_{\la\la'}\pi_{\la'\la''}\check{\rho})(c',c)=\int\limits_{\ker\pr_{KK''}}\check{\rho}\Big(\bar{a}+\omega_{\la''\la}(c'),\bar{a}+\omega_{\la''\la}(c)\Big)\,\,d\tl{\mu}_{\la''\la}(\bar{a}),
\end{equation*}
where $\bar{a}\in\ker\pr_{KK''}$ and $c',c\in Q_K$. Comparing this with \eqref{pi-ker-r} we arrive at
\[
(\pi_{\la\la'}\pi_{\la'\la''}\check{\rho})(c',c)=(\pi_{\la\la''}\check{\rho})(c',c)
\]  
which is obviously equivalent to \eqref{pipipi}.
\end{proof}

\subsubsection{Step 5}

Thus we have constructed the family $\{\D_\la,\pi_{\la\la'}\}_{\la\in\Lambda}$ and proved that the projections $\{\pi_{\la\la'}\}$ satisfy the consistency condition \eqref{pipipi}. Thus $\{\D_\la,\pi_{\la\la'}\}_{\la\in\Lambda}$ is a {\em projective family}. This fact allow us to define the space $\D$ of quantum states of the theory under consideration as the projective limit of the family 
\begin{equation}
\D:=\underleftarrow{\lim} \,\D_\lambda.
\label{D-df}
\end{equation}

\subsection{Remarks on Assumptions}

After finishing the construction we can clearly see the role played by each of Assumptions described in Section \ref{ad-as}. Thus Assumptions \ref{k-Lambda} and \ref{f-Lambda} ensure that the set $\Lambda$ is rich enough to include information encoded in all elementary d.o.f.. Moreover, Assumption \ref{k-Lambda} assures that each configurational elementary d.o.f. is a cylindrical function. Assumption \ref{RN} enables to introduce a structure of a real linear space on reduced configuration spaces $\{Q_K\}$. Assumptions \ref{comp-f} and \ref{const} guarantee that each operator $\hat{\varphi}$ defines a constant vector field on $Q_K$ which is an important element of the construction of injections $\{\omega_{\la'\la}\}$. Note that Assumption \ref{comp-f} is naturally satisfied if operators $\{\hat{\varphi}\}$ are defined via Poisson bracket \eqref{hat-varphi} without any  regularization. On the other hand Assumption \ref{const} makes it easy to formulate Assumption \ref{non-deg} and the latter assumption plays an important role in the construction of the injections $\{\omega_{\la'\la}\}$. Assumption \ref{Q'=Q} seems to be a natural consistency requirement imposed on the relation $\geq$ motivated by the fact that if $(\hat{F},K'),(\hat{F},K)\in\Lambda$ and $Q_{K'}=Q_{K}$ then $(\hat{F},K')$ and $(\hat{F},K)$ describe the same reduced system. Assumptions \ref{lin-comb} and \ref{FF'} allow to interpret $(\hat{F},K)$ as a subsystem of $(\hat{F}',K')$. Moreover, Assumption \ref{lin-comb} guarantees linearity of projections $\{\pr_{KK'}\}$.

\subsection{Construction of a directed set $(\Lambda,\geq)$---auxiliary facts \label{aux}}

Here we will present and prove some auxiliary facts which may be very useful while constructing a directed set $(\Lambda,\geq)$ for a theory (see Section \ref{DPG} and \cite{q-stat}).

\begin{lm}
Let $K,K'$ be sets of independent d.o.f. of $N$ and $N'$ elements respectively. Suppose that the image of the map $\tilde{K}'$  given by \eqref{k-inj} is $\R^{N'}$  and each d.o.f. in $K$ is a linear combination of d.o.f. in $K'$. Then the image of $\tilde{K}$ is $\R^N$.
\label{KK'-lm}
\end{lm}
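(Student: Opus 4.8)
The plan is to reduce the whole statement to an elementary fact about the image of a linear map. First I would record the meaning of the linear-combination hypothesis exactly as in the proof of Lemma \ref{lm-pr}: there is a constant matrix $B=(B_i{}^j)$, with $i=1,\ldots,N$ and $j=1,\ldots,N'$, such that $\kappa_i=B_i{}^j\kappa'_j$, and consequently, by \eqref{tk-btk'},
\[
\tilde{K}([q])=B\,\tilde{K}'([q]')
\]
for every $q\in Q$. Since $q\mapsto[q]'$ is surjective onto $Q_{K'}$, as $q$ ranges over all of $Q$ the vector $\tilde{K}'([q]')$ ranges over the entire image of $\tilde{K}'$. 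Thus the displayed identity says precisely that the image of $\tilde{K}$ is the image, under the linear map $B$, of the image of $\tilde{K}'$.

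Next I would feed in the hypothesis that the image of $\tilde{K}'$ is all of $\R^{N'}$. The previous observation then gives
\[
\im\tilde{K}=B(\R^{N'}),
\]
the full image of the linear map $B:\R^{N'}\to\R^N$. In particular $\im\tilde{K}$ is a \emph{linear subspace} of $\R^N$: it contains $0$ and is closed under the vector-space operations, and its dimension is at most $N$.

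Finally I would invoke the independence of $K$. By the definition of independence, $\im\tilde{K}$ is an $N$-dimensional submanifold of $\R^N$. A linear subspace of $\R^N$ is a submanifold whose dimension as a manifold coincides with its dimension as a vector space; hence the preceding two facts force $\im\tilde{K}$ to be a linear subspace of $\R^N$ of vector-space dimension $N$. The only such subspace is $\R^N$ itself, so $\im\tilde{K}=\R^N$, which is the claim.

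I do not expect any genuine obstacle here, since the content is entirely linear-algebraic once the matrix $B$ has been introduced. The only point deserving a word of care is the last step, where one must reconcile the two notions of dimension attached to $\im\tilde{K}$ — its dimension as a submanifold, supplied by the independence of $K$, and its dimension as a linear subspace, supplied by the representation $B(\R^{N'})$ — and then observe that a full-dimensional linear subspace of $\R^N$ can only be $\R^N$.
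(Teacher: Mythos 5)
Your proposal is correct and follows essentially the same route as the paper's own proof: both introduce the matrix $B$ from the linear-combination hypothesis, observe that $\tilde{K}([q])=B\tilde{K}'([q]')$ so that $\im\tilde{K}=B(\R^{N'})$ is a linear subspace (of dimension equal to the rank of $B$), and then use the independence of $K$ to force that dimension to be $N$, whence $\im\tilde{K}=\R^N$. Your extra remark reconciling the submanifold dimension with the vector-space dimension is a slightly more explicit rendering of the paper's statement that "the rank $M$ of $B$ is maximal," but it is the same argument.
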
 

\begin{proof}
Let $K=\{\kappa_1,\ldots,\kappa_N\}$ and $K'=\{\kappa'_1,\ldots,\kappa'_{N'}\}$. There exists a constant matrix $B=(B_i{}^j)$, ($i=1,\ldots,N$; $j=1,\ldots,N'$), such that for every $q\in Q$ 
\[
\kappa_{i}(q)=B_i{}^j\kappa'_{j}(q)
\]     
or, equivalently,
\[
\tilde{K}([q])=B\tilde{K}'([q]'),
\]
where $[q]\in Q_K$ and $[q]'\in Q_{K'}$. Since the image of $\tilde{K}'$ is $\R^{N'}$ the image of $\tilde{K}$ is $\R^M$ where $M\leq N$ is the rank of the matrix $B$. But we assumed that $K$ is a set of independent d.o.f. which means that the image of $\tilde{K}$ is an $N$-dimensional submanifold of $\R^N$. Thus the rank $M$ of $B$ is maximal and  $M=N$.           
\end{proof}

\begin{pro}
Let $K,K'$ be sets of independent d.o.f. of $N$ and $N'$ elements respectively such that $Q_K=Q_{K'}$. Suppose that there exists a set $\bar{K}$ of independent d.o.f. of $\bar{N}$ elements such that the image of $\tilde{\bar{K}}$ is $\R^{\bar{N}}$ and  each d.o.f. in $K\cup K'$ is a linear combination of d.o.f. in $\bar{K}$. Then $\tilde{K}$, $\tilde{K}'$ are bijections, $N=N'$, $\tilde{K}\circ\tilde{K}^{\prime -1}$ is a linear automorphism from $\R^N$ onto itself and each d.o.f. in $K$ is a linear combination of d.o.f. in $K'$.  
\label{KK'barK}
\end{pro}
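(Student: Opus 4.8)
The plan is to reduce the whole statement to an elementary fact about two matrices and their kernels, using the common refinement $\bar{K}$ as the device that \emph{linearizes} the comparison between $\tilde{K}$ and $\tilde{K}'$.

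First I would settle the bijectivity claim. Since each d.o.f. in $K$ (resp.\ $K'$) is a linear combination of d.o.f.\ in $\bar{K}$ and the image of $\tilde{\bar{K}}$ is $\R^{\bar{N}}$, Lemma \ref{KK'-lm} applies to the pairs $(K,\bar{K})$ and $(K',\bar{K})$ and yields that the images of $\tilde{K}$ and $\tilde{K}'$ are $\R^N$ and $\R^{N'}$ respectively; hence both maps are bijections. Writing $\bar{K}=\{\bar{\kappa}_1,\ldots,\bar{\kappa}_{\bar{N}}\}$, the hypotheses furnish constant matrices $B=(B_i{}^m)$ of size $N\times\bar{N}$ and $B'=(B'_j{}^m)$ of size $N'\times\bar{N}$ with $\kappa_i=B_i{}^m\bar{\kappa}_m$ and $\kappa'_j=B'_j{}^m\bar{\kappa}_m$; exactly as in the proof of Lemma \ref{KK'-lm}, independence of $K$ and $K'$ forces $B$ and $B'$ to have (full row) ranks $N$ and $N'$, so the associated linear maps $\R^{\bar{N}}\to\R^N$ and $\R^{\bar{N}}\to\R^{N'}$ are surjective.

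Next comes the key step: I would translate the hypothesis $Q_K=Q_{K'}$ into the matrix condition $\ker B=\ker B'$. Put $y(q):=(\bar{\kappa}_1(q),\ldots,\bar{\kappa}_{\bar{N}}(q))$; since $\tilde{\bar{K}}$ is onto $\R^{\bar{N}}$ and $\pr_{\bar{K}}$ is surjective, $y$ ranges over all of $\R^{\bar{N}}$ as $q$ ranges over $Q$. Because $\kappa_i(q)=(By(q))_i$ and $\kappa'_j(q)=(B'y(q))_j$, we have $q\sim_K q'\iff By(q)=By(q')$ and $q\sim_{K'} q'\iff B'y(q)=B'y(q')$. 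The equality $Q_K=Q_{K'}$ says that $\sim_K$ and $\sim_{K'}$ coincide, which—since $y$ is surjective onto $\R^{\bar{N}}$—is precisely the statement that for all $w,w'\in\R^{\bar{N}}$ one has $B(w-w')=0\iff B'(w-w')=0$, i.e.\ $\ker B=\ker B'$.

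Finally I would harvest the conclusions. With $\ker B=\ker B'$ and both matrices surjective, the assignment $By\mapsto B'y$ is a well-defined linear map $\R^N\to\R^{N'}$: well-definedness is exactly $\ker B\subseteq\ker B'$, and linearity is immediate. It is bijective (injectivity uses $\ker B'\subseteq\ker B$, surjectivity uses surjectivity of $B'$), and it coincides with $\tilde{K}'\circ\tilde{K}^{-1}$, for if $v=\tilde{K}(z)$ and $\pr_K(q)=z$ then $By(q)=v$ and, using $\pr_{K'}=\pr_K$, $\tilde{K}'(z)=B'y(q)$. A linear bijection $\R^N\to\R^{N'}$ forces $N=N'$, and its inverse $\tilde{K}\circ\tilde{K}^{\prime-1}$ is then a linear automorphism of $\R^N$. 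Denoting by $A$ the matrix of $\tilde{K}\circ\tilde{K}^{\prime-1}$, the identity $\tilde{K}([q])=A\tilde{K}'([q])$ read componentwise gives $\kappa_i=A_i{}^j\kappa'_j$, so each d.o.f.\ in $K$ is a linear combination of those in $K'$. I expect the one genuinely conceptual point—rather than an arithmetic obstacle—to be the realization that the common refinement $\bar{K}$ is exactly what upgrades the a priori merely diffeomorphic comparison map $\tilde{K}'\circ\tilde{K}^{-1}$ (guaranteed in general only by the compatibility assumption of Section \ref{pre}) to a genuinely linear one; the technical crux encoding this is the kernel equality $\ker B=\ker B'$, from which linearity and the dimension count $N=N'$ follow mechanically.
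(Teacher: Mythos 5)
Your proof is correct and takes essentially the same route as the paper's: both invoke Lemma \ref{KK'-lm} to get bijectivity of $\tilde{K}$ and $\tilde{K}'$, introduce the full-rank matrices $B,B'$ relating $K$ and $K'$ to $\bar{K}$, and exploit their surjectivity to identify the transition map with a linear bijection, forcing $N=N'$ and the linear-combination conclusion. The only difference is one of packaging: where you extract the kernel equality $\ker B=\ker B'$ and factor the map $By\mapsto B'y$ through it, the paper substitutes $[q]=[q]'$ directly into the two matrix relations to obtain the identity $(\tilde{K}\circ\tilde{K}^{\prime -1})\circ B'=B$ and reads linearity off that—logically the same mechanism.
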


\begin{proof}
We know from Lemma \ref{KK'-lm} that the images of $\tilde{K}$ and $\tilde{K}'$ are $\R^N$ and $\R^{N'}$ respectively, i.e. the maps are bijections. There exist constant matrices $B$ and $B'$ such that (see the proof of Lemma \ref{KK'-lm})
\begin{align}
&\tilde{K}([q])=B\tilde{\bar{K}}(\overline{[q]}),,&&\tilde{K}'([q]')=B'\tilde{\bar{K}}(\overline{[q]}),
\label{KBK2}
\end{align}
where $[q]\in Q_K$, $[q]'\in Q_{K'}$ and $\overline{[q]}\in Q_{\bar{K}}$. We assumed that $Q_{K}=Q_{K'}$, that is, $[q]=[q']$ for every $q\in Q$. Since $\tilde{K}'$ is bijection the second equation in \eqref{KBK2} can be expressed in the following form:
\[
[q]=[q]'=\tilde{K}^{\prime -1}\Big(B'\tilde{\bar{K}}(\overline{[q]})\Big).
\]       
Setting this to the first equation in \eqref{KBK2} we obtain
\[
\Big(\tilde{K}\circ\tilde{K}^{\prime -1}\circ (B'\tilde{\bar{K}})\Big)(\overline{[q]})=B\tilde{\bar{K}}(\overline{[q]}).
\]
Taking into account  that $\tilde{\bar{K}}$ is a bijection we rewrite the equation above as follows
\[
(\tilde{K}\circ\tilde{K}^{\prime -1})\circ B'=B
\]   
--- here we treat $B,B'$ as linear surjections from $\R^{\bar{N}}$  onto, respectively, $\R^N$ and $\R^{N'}$; the maps are surjections because, as we know from the proof of Lemma \ref{KK'-lm}, the rank of both matrices is maximal.

The above equation and the properties of $B$ and $B'$  imply that $\tilde{K}\circ\tilde{K}^{\prime -1}$ is a linear map from $\R^{N'}$ onto $\R^N$. Since it is a composition of two bijections it is also a bijection. Thus $N=N'$ and $\tilde{K}\circ\tilde{K}^{\prime -1}$ is a linear automorphism on $\R^N$.

Moreover, the following identity 
\[
\tilde{K}([q])=(\tilde{K}\circ\tilde{K}^{\prime -1})\tilde{K}^{\prime -1}([q]')
\]
which holds for every $q\in Q$ can be rewritten as (see \eqref{k-inj})
\[
\begin{pmatrix}
\kappa_1(q)\\
\vdots\\
\kappa_N(q)
\end{pmatrix}=(\tilde{K}\circ\tilde{K}^{\prime -1})
\begin{pmatrix}
\kappa'_1(q)\\
\vdots\\
\kappa'_N(q)
\end{pmatrix},
\]    
which means that each d.o.f. $\kappa_i$ in $K$ is a linear combination of d.o.f. $\{\kappa'_1,\ldots,\kappa'_N\}=K'$.     
\end{proof}

\begin{pro}
Suppose that there exists a subset $\mathbf{K}'$ of $\mathbf{K}$ such that  for every finite set $K_0$ of configurational elementary d.o.f. there exists $K'_0\in\mathbf{K}'$ satisfying the following conditions: 
\begin{enumerate}
\item the map $\tilde{K}'_0$ is a bijection; 
\item each d.o.f. in $K_0$ is a linear combination of d.o.f. in $K'_0$.  
\end{enumerate} 
Then
\begin{enumerate}
\item for every set $K\in\mathbf{K}$ the map $\tilde{K}$ is a bijection. Consequently, $Q_K\cong \R^N$ with $N$ being the number of elements of $K$ and the map $\tilde{K}$ defines a linear structure on $Q_K$ being the pull-back of the linear structure on $\R^N$; if $Q_{K}=Q_{K'}$ for some other set $K'\in\mathbf{K}$ then the linear structures defined on the space by $\tilde{K}$ and $\tilde{K}'$ coincide.
\item if a cylindrical function $\Psi$ compatible with a set $K\in\mathbf{K}$ can be expressed as
\[
\Psi=\pr_{K'}\psi',
\]  
where $K'\in\mathbf{K}$ and $\psi'$ is a complex function on $Q_{K'}$ then $\psi'$ is smooth and consequently $\Psi$ is compatible with $K'$;     
\item for every element $\Psi\in\Cyl$ there exists a set $K\in\mathbf{K}'$ such that $\Psi$ is compatible with $K$.    
\end{enumerate}
\label{big-pro}
\end{pro}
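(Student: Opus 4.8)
The plan is to establish the three assertions in order, since the first supplies the linear structure on which the other two rely. For assertion~1, I would fix $K\in\mathbf{K}$ with $N$ elements; being in particular a finite set of configurational elementary d.o.f., $K$ triggers the hypothesis, yielding $K'_0\in\mathbf{K}'$ with $\tilde{K}'_0$ a bijection and each d.o.f. in $K$ a linear combination of d.o.f. in $K'_0$. The image of $\tilde{K}'_0$ is then all of $\R^{N'_0}$, so Lemma~\ref{KK'-lm} at once gives that the image of $\tilde{K}$ is $\R^N$; hence $\tilde{K}$ is a bijection and $Q_K\cong\R^N$ carries the pulled-back linear and differential structure. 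For the coincidence of the two linear structures when $Q_K=Q_{K'}$, I would apply the hypothesis to the finite set $K\cup K'$, obtaining $\bar{K}\in\mathbf{K}'$ with $\tilde{\bar{K}}$ a bijection such that each d.o.f. in $K\cup K'$ is a linear combination of d.o.f. in $\bar{K}$. This is exactly the input required by Proposition~\ref{KK'barK}, whose conclusion that $\tilde{K}\circ\tilde{K}^{\prime-1}$ is a linear automorphism of $\R^N$ says precisely that the structures defined by $\tilde{K}$ and $\tilde{K}'$ agree.

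With assertion~1 in hand, every $Q_K$ ($K\in\mathbf{K}$) is a finite-dimensional linear space, and by Lemma~\ref{lm-pr} every linear-combination relation between sets of d.o.f. produces a \emph{linear} projection between the corresponding reduced configuration spaces. This common-refinement mechanism drives the remaining parts. For assertion~3, given $\Psi\in\Cyl$ I would write it as a finite combination $\Psi=\sum_a c_a\,\pr^*_{K_a}\psi_a$ of cylindrical functions with $\psi_a$ smooth, and apply the hypothesis to the finite union $K_0:=\bigcup_a K_a$, obtaining $K\in\mathbf{K}'$ such that each $K_a$ consists of linear combinations of d.o.f. in $K$. Lemma~\ref{lm-pr} then supplies linear projections $\pr_{K_aK}:Q_K\to Q_{K_a}$ with $\pr_{K_a}=\pr_{K_aK}\circ\pr_K$, so $\Psi_a=\pr^*_K(\pr^*_{K_aK}\psi_a)$ where $\pr^*_{K_aK}\psi_a$ is smooth (a smooth function composed with a linear map). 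Summing gives $\Psi=\pr^*_K\psi$ with $\psi:=\sum_a c_a\,\pr^*_{K_aK}\psi_a$ smooth, i.e. $\Psi$ is compatible with $K\in\mathbf{K}'$.

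Assertion~2 uses the same idea in reverse. Given $\Psi=\pr^*_K\psi=\pr^*_{K'}\psi'$ with $\psi$ smooth, I would apply the hypothesis to $K\cup K'$ to obtain $\bar{K}\in\mathbf{K}'$ refining both, with linear projections $\pr_{K\bar{K}}:Q_{\bar{K}}\to Q_K$ and $\pr_{K'\bar{K}}:Q_{\bar{K}}\to Q_{K'}$ satisfying $\pr_K=\pr_{K\bar{K}}\circ\pr_{\bar{K}}$ and $\pr_{K'}=\pr_{K'\bar{K}}\circ\pr_{\bar{K}}$. Pulling both expressions of $\Psi$ up to $Q_{\bar{K}}$ and using that $\pr_{\bar{K}}$ is surjective (so $\pr^*_{\bar{K}}$ is injective) gives $\pr^*_{K\bar{K}}\psi=\pr^*_{K'\bar{K}}\psi'$ on $Q_{\bar{K}}$, whose left-hand side is manifestly smooth. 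The one genuinely non-formal step, and the place I expect the main care to be needed, is recovering smoothness of $\psi'$ itself from smoothness of $\pr^*_{K'\bar{K}}\psi'$: I would choose a linear section $s:Q_{K'}\to Q_{\bar{K}}$ of the linear surjection $\pr_{K'\bar{K}}$ and write $\psi'=s^*(\pr^*_{K'\bar{K}}\psi')$, which is smooth as a composition with the linear (hence smooth) map $s$. This shows $\psi'$ is smooth, so $\Psi$ is compatible with $K'$, completing the proof.
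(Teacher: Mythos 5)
Your proof is correct and follows essentially the same route as the paper: Lemma \ref{KK'-lm} for bijectivity of $\tilde{K}$, Proposition \ref{KK'barK} applied to a common refinement supplied by the hypothesis for the coincidence of the linear structures, and a common refinement $\bar{K}$ together with the linear projections of Lemma \ref{lm-pr} for the last two assertions. The only divergence is in Assertion 2, where the paper argues by contradiction, asserting without further justification that the pullback $\pr^*_{K'\bar{K}}\psi'$ of a non-smooth $\psi'$ is non-smooth, whereas your linear-section identity $\psi'=s^*\bigl(\pr^*_{K'\bar{K}}\psi'\bigr)$ proves exactly that claim (in contrapositive form), making explicit a detail the paper leaves unproved.
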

\noindent Note that the first assertion of the proposition means that on every reduced configuration space there exists a natural differential structure which guarantees that the space $\Cyl$ is well defined. 

\begin{proof}\mbox{}
  \noindent\paragraph{Assertion 1} Assume that $K_0$ is a set of independent d.o.f. also. Then we can apply Lemma \ref{KK'-lm} to $K_0$ and $K_0'$ and conclude that the image of $\tilde{K}_0$ is $\R^{N_0}$, where $N_0$ is the number of elements of $K_0$. Thus $\tilde{K}_0$ is a bijection which can be used to pull back the linear structure on $\R^{N_0}$ onto $Q_{K_0}$. The conclusion is that on every reduced configuration space there exists a {\em linear structure}.

Suppose now that for distinct sets of independent d.o.f. ${K}$ and ${K}'$ the spaces $Q_{K}$ and $Q_{{K}'}$ coincide, $Q_K=Q_{K'}$. The assumptions of the proposition under consideration guarantee that there exists a set $\bar{K}$ of independent d.o.f. such that Proposition \ref{KK'barK} can be applied to the sets $K,K'$ and $\bar{K}$. By virtue of that proposition $\tilde{K}\circ\tilde{K}^{\prime -1}$ is a linear automorphism from $\R^N$ onto itself (where $N$ is the number of elements of $K$ and of elements of $K'$) the linear structure on $Q_K=Q_{K'}$ defined by $\tilde{K}$ coincides with one defined by $\tilde{K}'$. 

Thus on every reduced configuration space there exists a {\em natural} linear structure.   
\paragraph{Assertion 2} Let $\bar{K}$ be a set of independent d.o.f. such that each d.o.f. in $K\cup K'$ is a linear combination of d.o.f. in $\bar{K}$. We know already that the maps $\tilde{K},\tilde{K}'$ and $\tilde{\bar{K}}$ are bijections. Thus by virtue of Lemma \ref{lm-pr} there exist linear projections 
\begin{align*}
&\pr_{K\bar{K}}:Q_{\bar{K}}\to Q_{K}, && \pr_{K'\bar{K}}:Q_{\bar{K}}\to Q_{K'} 
\end{align*}
such that
\[
\Psi=\pr^*_{\bar{K}}(\pr^*_{K\bar{K}}\psi)=\pr^*_{\bar{K}}(\pr^*_{K'\bar{K}}\psi').
\]

Assume that the function $\psi'$ is not smooth. Then $\pr^*_{K'\bar{K}}\psi'$ (being a complex function on $Q_{\bar{K}}$) is not smooth either. But $\pr^*_{K\bar{K}}\psi$ is smooth and
\[
\pr^*_{K'\bar{K}}\psi'=\pr^*_{K\bar{K}}\psi.
\]   
Thus we see that the assumption that $\psi'$ is not smooth leads to a contradiction.

Since $\psi'$ is smooth $\Psi$ is a cylindrical function compatible with $K'$.      

\paragraph{Assertion 3} 

Consider an arbitrary function $\Psi\in\Cyl$. It is a finite sum of cylindrical functions:
\begin{equation}
\Psi=\sum_{a=1}^n \Psi_a=\sum_{a=1}^n\pr^*_{K_a}\psi_a,
\label{Psi=sum}
\end{equation}
where $\Psi_a$ is a cylindrical function compatible with a set $K_a$ of independent d.o.f, and $\psi_a$ is a complex function on $Q_{K_a}$. Let $\bar{K}\in\mathbf{K}'$ be such that each d.o.f. in $\bigcup_{a=1}^n K_a$ is a linear combination of d.o.f. in $\bar{K}$. Using Lemma \ref{lm-pr} we can write
\[
\Psi=\pr^*_{\bar{K}}\Big(\sum_{a=1}^n\pr^*_{\bar{K}K_a}\psi_a\Big),
\] 
which means that $\Psi$ is compatible with $\bar{K}$.

\end{proof}

Let $\bld{\Psi}$ be a subset of $\Cyl$. Then operators in $\hat{\F}$ restricted to $\bld{\Psi}$ can be regarded as maps from $\bld{\Psi}$ into $\Cyl$. Recall that both $\Cyl$ and $\hat{\F}$ are linear spaces. Therefore the restricted operators can be regarded as maps valued in a linear space and the space of all the restricted operators is a linear space. Consequently, the notion of linear independence of the restricted operators  is well defined---below this notion will be often called a linear independence of the operators on $\bld{\Psi}$.     

\begin{lm}
Let $\Cyl_K$ be a set of all cylindrical functions compatible with a set $K$ of independent d.o.f..  Assume that operators $\{\hat{\varphi}_1,\ldots,\hat{\varphi}_M\}\subset \hat{\F}$ act on elements of $\Cyl_K$ according to the formula in Assumption \ref{comp-f}. If $\{\hat{\varphi}_1,\ldots,\hat{\varphi}_M\}\subset \hat{\F}$ are linearly independent on a subset $\bld{\Psi}$ of $\Cyl_K$ then they are linearly independent on $K$.       
\label{cyl-K}
\end{lm}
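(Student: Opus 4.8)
The plan is to argue by contraposition: I will assume that $\{\hat{\varphi}_1,\ldots,\hat{\varphi}_M\}$ are linearly \emph{dependent} on $K$ and deduce that they are then linearly dependent on $\bld{\Psi}$, contradicting the hypothesis. First I would make precise the two notions of linear independence involved. By Assumption \ref{const} each $\hat{\varphi}_j\kappa_i$ is a real constant, so ``linear independence on $K$'' of the operators means that the $M$ vectors $(\hat{\varphi}_j\kappa_1,\ldots,\hat{\varphi}_j\kappa_N)\in\R^N$, $j=1,\ldots,M$, are linearly independent; equivalently, a nontrivial combination $\sum_j c_j\hat{\varphi}_j$ cannot annihilate every $\kappa\in K$. ``Linear independence on $\bld{\Psi}$'', following the discussion preceding the lemma, means that the restriction of $\sum_j c_j\hat{\varphi}_j$ to $\bld{\Psi}$ cannot be the zero map unless all $c_j=0$.

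So suppose the operators are linearly dependent on $K$: there exist constants $c_1,\ldots,c_M$, not all zero, with
\[
\sum_{j=1}^M c_j\,\hat{\varphi}_j\kappa_i=0\qquad\text{for every }i=1,\ldots,N.
\]
The crucial step is to propagate this relation from the generators $\kappa_i$ to the whole of $\Cyl_K$ by means of the explicit formula of Assumption \ref{comp-f}. For an arbitrary $\Psi=\pr_K^*\psi\in\Cyl_K$ that formula yields
\[
\Big(\sum_{j=1}^M c_j\hat{\varphi}_j\Big)\Psi=\sum_{i=1}^N\big(\pr_K^*\partial_{x_i}\psi\big)\sum_{j=1}^M c_j\,\hat{\varphi}_j\kappa_i=0,
\]
where I have interchanged the two finite sums and inserted the displayed relation. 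Hence the nonzero combination $\sum_j c_j\hat{\varphi}_j$ annihilates every element of $\Cyl_K$, and in particular every element of the subset $\bld{\Psi}\subset\Cyl_K$. This exhibits a nontrivial linear dependence of $\{\hat{\varphi}_1,\ldots,\hat{\varphi}_M\}$ on $\bld{\Psi}$, contradicting the hypothesis; therefore the operators must be linearly independent on $K$.

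I do not expect a genuine obstacle here: once Assumption \ref{comp-f} is invoked, the action of any $\hat{\varphi}_j$ on $\Cyl_K$ is completely determined by the $N$ numbers $\hat{\varphi}_j\kappa_1,\ldots,\hat{\varphi}_j\kappa_N$, so linear relations among the operators on $K$ and on all of $\Cyl_K$ coincide, and restricting further to $\bld{\Psi}$ can only make independence harder to achieve. The only point requiring care is pinning down the precise meaning of the two phrases ``linearly independent on $\bld{\Psi}$'' and ``linearly independent on $K$'', and checking that Assumption \ref{comp-f} applies to the elements of $\bld{\Psi}$, which it does because $\bld{\Psi}\subset\Cyl_K$ consists of cylindrical functions compatible with $K$.
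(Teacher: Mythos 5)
Your proof is correct and takes essentially the same route as the paper's: both argue by contraposition, assuming a nontrivial relation $\sum_j c_j\hat{\varphi}_j\kappa_i=0$ on $K$ and using the formula of Assumption \ref{comp-f} to propagate it to every $\Psi=\pr_K^*\psi\in\Cyl_K$, hence to $\bld{\Psi}$. Your preliminary clarification of the two notions of linear independence is a harmless addition; the core argument is identical to the one in the paper.
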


\begin{proof}
Suppose that the operators $\{\hat{\varphi}_1,\ldots,\hat{\varphi}_M\}$ are linearly dependent on the set $K=\{\kappa_1,\ldots,\kappa_N\}$. This means that there exists real numbers $a_1,\ldots,a_M$ such that $a^2_1+\ldots+a^2_M\neq 0$ and for every $\kappa_i\in{\cal K}$ 
\begin{equation}
\sum_ja_j\hat{\varphi}_j\kappa_i=0.
\label{a-varphi-0}
\end{equation} 

Let $\Psi=\pr^*_K\psi$ be an arbitrary element of $\bld{\Psi}$. Since the operators act on cylindrical functions in $\Cyl_K$ according to the formula in Assumption \ref{comp-f}
\begin{equation}
\sum_ja_j\hat{\varphi}_j\Psi=\sum_{i}\Big(\pr^*_{K}\partial_{x_{i}}\psi\Big)\Big( \sum_ja_j\hat{\varphi}_j\kappa_{i}\Big)=0,
\label{aphiPsi-0}
\end{equation}
where in the last step we used \eqref{a-varphi-0}. 

Because $\Psi$ is arbitrary we conclude that the operators under consideration are linearly dependent on $\bld{\Psi}$. Thus if the operators are linearly independent on $\bld{\Psi}$ the they are linearly independent on $K$.     
\end{proof}

\begin{pro}
Let $\Lambda$ be a subset of $\hat{\mathbf{F}}\times\mathbf{K}$ which satisfies Assumptions \ref{k-Lambda} and \ref{comp-f}. Then for every finite set $\{\hat{\varphi}_1,\ldots,\hat{\varphi}_M\}\subset\hat{\cal F}$ of linearly independent operators there exists a set $(\hat{F},K)\in\Lambda$ such that the operators restricted to $K$ remain linearly independent.
\label{Lambda-pr}   
\end{pro}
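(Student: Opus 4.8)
The plan is to reduce, via Lemma~\ref{cyl-K}, the desired linear independence on $K$ to linear independence on a \emph{finite} collection of cylindrical functions that can all be made compatible with a single set $K$ occurring in $\Lambda$. Concretely, by Lemma~\ref{cyl-K} it suffices to exhibit a pair $(\hat{F},K)\in\Lambda$ together with a subset $\bld{\Psi}\subset\Cyl_K$ on which $\hat{\varphi}_1,\ldots,\hat{\varphi}_M$ are linearly independent: since $(\hat{F},K)\in\Lambda$, Assumption~\ref{comp-f} guarantees that these operators act on $\Cyl_K$ by the formula required in Lemma~\ref{cyl-K}, and the lemma then upgrades independence on $\bld{\Psi}$ to independence on $K$, which is exactly the assertion to be proved.

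To produce such a pair I would first extract a finite witness of independence. By hypothesis $\hat{\varphi}_1,\ldots,\hat{\varphi}_M$ are linearly independent in $\hat{\cal F}$, i.e.\ as operators on $\Cyl$, so the only tuple $(a_1,\ldots,a_M)\in\R^M$ with $\sum_j a_j\hat{\varphi}_j\Psi=0$ for all $\Psi\in\Cyl$ is the zero tuple. For each $\Psi\in\Cyl$ the set $V_\Psi:=\{(a_j)\mid\sum_j a_j\hat{\varphi}_j\Psi=0\}$ is a linear subspace of $\R^M$, and $\bigcap_{\Psi\in\Cyl}V_\Psi=\{0\}$. Since $\R^M$ is finite dimensional, a decreasing family of subspaces stabilises after finitely many intersections, so the intersection $\{0\}$ is already attained by finitely many $\Psi$. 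Using $V_{\sum_s c_s\Theta_s}\supseteq\bigcap_s V_{\Theta_s}$, I may replace any composite witness by its cylindrical summands, and hence assume each witness is a \emph{single} cylindrical function $\Phi_t$, compatible with some set $K_t\in\mathbf{K}$ of independent d.o.f.\ $(t=1,\ldots,n)$.

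Next I would set $K_0:=\bigcup_{t=1}^n K_t$, a finite set of configurational elementary d.o.f., and apply Assumption~\ref{k-Lambda} to obtain $(\hat{F},K)\in\Lambda$ such that every $\kappa\in K_0$ is a cylindrical function compatible with $K$. The remaining point is that each witness $\Phi_t$ is then itself compatible with $K$. Writing $\Phi_t=\pr^*_{K_t}\phi_t$, the value $\Phi_t(q)$ depends on $q$ only through the numbers $\{\kappa(q)\}_{\kappa\in K_t}$; but each such $\kappa\in K_0$ equals $\pr_K^*g_\kappa$ for a smooth $g_\kappa$ on $Q_K$, so $\Phi_t$ factors through $\pr_K$ and equals $\pr_K^*\psi'_t$, where $\psi'_t$ is the composite of $\phi_t$ (read through the diffeomorphism $\tilde{K}_t$ onto its image) with the map $[q]\mapsto(g_\kappa([q]))_{\kappa\in K_t}$. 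Thus $\{\Phi_1,\ldots,\Phi_n\}\subset\Cyl_K$, the operators are linearly independent on this subset, and Lemma~\ref{cyl-K} delivers their linear independence on $K$, completing the argument.

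The step I expect to be the main obstacle is the compatibility transfer in the last paragraph: verifying not merely that $\Phi_t$ factors set-theoretically through $\pr_K$, but that the resulting representative $\psi'_t$ on $Q_K$ is \emph{smooth}. The factoring is immediate, whereas smoothness requires using that $K_t$ is independent, so that $\tilde{K}_t$ is a diffeomorphism onto its image (an open subset of $\R^{N_t}$), and then composing $\phi_t\circ\tilde{K}_t^{-1}$ with the smooth functions $g_\kappa$ on $Q_K$. This is the only point at which a genuine differential-geometric argument, rather than linear-algebraic bookkeeping, enters; it is closely related to, though established here independently of, Assertion~2 of Proposition~\ref{big-pro}.
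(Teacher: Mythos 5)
Your proof is correct, and it reaches the conclusion by a route that is structurally parallel to, but not the same as, the paper's. Both arguments share the same skeleton: extract a finite witness set for the linear independence (using finite-dimensionality of $\R^M$), use Assumption \ref{k-Lambda} to find a pair $(\hat{F},K)\in\Lambda$ adapted to that witness set, and finish with Lemma \ref{cyl-K}. The difference lies in where the descent happens and what the witnesses are. The paper first proves (its Step 1, by contraposition) that linear independence of the operators on $\Cyl$ forces their independence when restricted to the set $\K$ of \emph{all} configurational elementary d.o.f.; it then runs the greedy dimension-descent inside $\K$, so its witnesses are elementary d.o.f. $\bar{\kappa}_1,\ldots,\bar{\kappa}_{\bar{M}}$, and for these Assumption \ref{k-Lambda} directly asserts compatibility with $K$ -- no transfer argument is needed when Lemma \ref{cyl-K} is finally invoked. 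You instead run the descent directly in $\Cyl$, so your witnesses are cylindrical functions $\Phi_t$ compatible with various sets $K_t$, and you must then prove the compatibility transfer: that each $\Phi_t$ is compatible with the single $K$ supplied by Assumption \ref{k-Lambda} applied to $K_0=\bigcup_t K_t$. Your handling of that transfer is the right one, and in particular you correctly isolate the only genuinely geometric point -- that the image of $\tilde{K}_t$ is an $N_t$-dimensional submanifold, hence an \emph{open} subset of $\R^{N_t}$, so $\phi_t\circ\tilde{K}_t^{-1}$ composed with the smooth map $[q]\mapsto(g_\kappa([q]))_{\kappa\in K_t}$ is smooth. It is worth noting that the paper is not actually free of this issue: its Step 1 contains the phrase ``Consequently, $\Psi$ is compatible with $K$'', which is exactly the transfer you prove, left implicit there (and not covered by Proposition \ref{big-pro}, whose hypotheses involve linear combinations rather than mere cylindrical compatibility). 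So the paper's route buys a cleaner final step at the cost of an implicit appeal to the transfer earlier, while your route skips the reduction to $\K$ entirely and makes that tacit step explicit; in that sense your writeup is the more self-contained of the two.
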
 

\begin{proof}
The proof consists of three steps.

\paragraph{Step 1} Suppose that $\{\hat{\varphi}_1,\ldots,\hat{\varphi}_M\}\subset\hat{\cal F}$ are linearly dependent when restricted to ${\cal K}$. This means that there exists real numbers $a_1,\ldots,a_M$ such that $a^2_1+\ldots+a^2_M\neq 0$ and for every $\kappa\in{\cal K}$ 
\begin{equation}
\sum_ja_j\hat{\varphi}_j\kappa=0.
\label{a-varphi}
\end{equation}

Let $\Psi$ be an arbitrary element of $\Cyl$. Then it can be expressed by means of the formula \eqref{Psi=sum}. It follows from Assumption \ref{k-Lambda} that there exists $(\hat{F},K)\in\Lambda$ such that each d.o.f. in $\bigcup_{a=1}^n K_a$ ($\{K_a\}$ are reduced configuration spaces appearing in \eqref{Psi=sum}) is a cylindrical function compatible with $K$. Consequently, $\Psi$ is compatible with $K$ and by virtue of Assumption \ref{comp-f} and \eqref{a-varphi} the formula \eqref{aphiPsi-0} holds. Since $\Psi$ is an arbitrary element of $\Cyl$ this result means that the operators $\{\hat{\varphi}_1,\ldots,\hat{\varphi}_M\}$ are linearly dependent on $\Cyl$. 

We conclude that if $\{\hat{\varphi}_1,\ldots,\hat{\varphi}_M\}$ are linearly dependent on $\K$ then they are linearly dependent on $\Cyl$, that is, just linearly dependent. Consequently, if the operators are linearly independent then they are linearly independent when restricted to $\K$.   

\paragraph{Step 2} Assume that $\{\hat{\varphi}_1,\ldots,\hat{\varphi}_M\}$ are linearly independent. Given $\kappa\in\K$, let us consider the following equation  
\[
\sum_ja_j\hat{\varphi}_j\kappa=0
\]    
imposed on unknown numbers $(a_1,\ldots,a_M)\in\R^M$. The set of all solutions of this equation is a linear subspace of $\R^M$ which will be denoted by $\mathbb{P}_\kappa$.

The linear independence of  $\{\hat{\varphi}_1,\ldots,\hat{\varphi}_M\}$ and the result of Step 1 mean that  
\begin{equation}
\bigcap_{\kappa\in\K}\mathbb{P}_\kappa=0\in\R^M.
\label{bigcap}
\end{equation}

Let us fix an elementary d.o.f. $\bar{\kappa}_1\in{\cal K}$. If $\mathbb{P}_{\bar{\kappa}_1}= 0$ then the operators $\{\hat{\varphi}_1,\ldots,\hat{\varphi}_M\}$ restricted to $\{\bar{\kappa}_1\}$ are linearly independent and we can proceed to Step 3. If $\mathbb{P}_{\bar{\kappa}_1}\neq 0$ then suppose  that for every $\kappa\in{\cal K}\setminus\{\bar{\kappa}_1\}$ 
\[
\mathbb{P}_{\kappa}=\mathbb{P}_{\bar{\kappa}_1}.
\]    
But this cannot be true because of \eqref{bigcap}. Hence there exists $\bar{\kappa}_2\in\K$ such that $\mathbb{P}_{\bar{\kappa}_2}\neq\mathbb{P}_{\bar{\kappa}_1}$ and consequently 
\[
\mathbb{P}_{\bar{\kappa}_1}\cap\mathbb{P}_{\bar{\kappa}_2}
\]
 is a linear subspace of $\R^M$ of the dimension {\em lower} than $\dim \mathbb{P}_{\bar{\kappa}_1}$.

If  $\mathbb{P}_{\bar{\kappa}_1}\cap\mathbb{P}_{\bar{\kappa}_2}=0$ then the operators $\{\hat{\varphi}_1,\ldots,\hat{\varphi}_M\}$ restricted to $\{\bar{\kappa}_1,\bar{\kappa}_{2}\}$ are linearly independent and we can proceed to Step 3. If $\mathbb{P}_{\bar{\kappa}_1}\cap\mathbb{P}_{\bar{\kappa}_2}\neq 0$ then suppose that for every $\kappa\in{\cal K}\setminus\{\bar{\kappa}_1,\bar{\kappa}_2\}$
\[
\mathbb{P}_\kappa=\mathbb{P}_{\bar{\kappa}_1}\cap\mathbb{P}_{\bar{\kappa}_2}.
\]     
But again this cannot be true because of \eqref{bigcap}. Hence there exists $\bar{\kappa}_3\in\K$ such that $\mathbb{P}_{\bar{\kappa}_3}\neq\mathbb{P}_{\bar{\kappa}_1}\cap\mathbb{P}_{\bar{\kappa}_2}$ and consequently
\[
\mathbb{P}_{\bar{\kappa}_1}\cap\mathbb{P}_{\bar{\kappa}_2}\cap\mathbb{P}_{\bar{\kappa}_3}
\]
 is a linear subspace of $\R^M$ of the dimension {\em lower} than $\dim (\mathbb{P}_{\bar{\kappa}_1}\cap\mathbb{P}_{\bar{\kappa}_2})$.

It is clear now that after a finite number of such steps we obtain a set $\{\bar{\kappa}_1,\ldots,\bar{\kappa}_{\bar{M}}\}$, ($\bar{M}\leq M$) such that
\[
\mathbb{P}_{\bar{\kappa}_1}\cap\ldots\cap\mathbb{P}_{\bar{\kappa}_{\bar{M}}}=0\in\R^M
\] 
which means that the operators $\{\hat{\varphi}_1,\ldots,\hat{\varphi}_M\}$ restricted to $\{\bar{\kappa}_1,\ldots,\bar{\kappa}_{\bar{M}}\}$ are linearly independent.
 
\paragraph{Step 3} By virtue of Assumption \ref{k-Lambda} there exists $(\hat{F},K)\in \Lambda$  such that each d.o.f. in $\{\bar{\kappa}_1,\ldots,\bar{\kappa}_{\bar{M}}\}$  is a cylindrical function compatible with $K$. The result of Step 2 together with Lemma \ref{cyl-K} mean that the operators $\{\hat{\varphi}_1,\ldots,\hat{\varphi}_M\}$ are linearly independent on $K$. 
\end{proof}

\section{By-product: a Hilbert space built from almost periodic functions \label{by-prod}}

It turns out that in the case of each theory for which it is possible to construct the space $\D$ of quantum states as described above it is also possible to construct a Hilbert space using only the configurational elementary d.o.f.. The construction we are going to present below applies almost periodic functions which some time ago were used to build a Hilbert space for Loop Quantum Cosmology---see \cite{mlqc,lqc} and references therein. 

Consider a collection $\K=\{\kappa\}$ of configurational elementary d.o.f. and suppose that there exists a directed set $(\mathbb{K},\geq)$ such that  
\begin{enumerate}
\item each element $K$ of $\mathbb{K}$ consists of a finite number of independent d.o.f. belonging to $\K$;  
\item for each $K\in\mathbb{K}$ the image of the map $\tilde{K}$ is $\R^N$, where $N$ is the number of elements of $K$;
\item if $K'\geq K$ then each elementary d.o.f in $K$ is a linear combination of d.o.f. in $K'$;   
\item if $Q_{K'}=Q_{K}$ then $K'\geq K$;  
\item for every finite set $K_0$ of configurational elementary d.o.f. there exists $K\in\mathbb{K}$ such that each $\kappa\in K_0$ is a cylindrical function compatible with $K$.
\end{enumerate} 

It turns out that every directed set $(\Lambda,\geq)$ satisfying Assumptions listed in Section \ref{ad-as} provides us with a directed set $(\mathbb{K},\geq)$ satisfying the requirements listed above: the set $\mathbb{K}$ is a subset of $\mathbf{K}$ such that for every $K\in\mathbb{K}$ there exists $\hat{F}\in\hat{\mathbf{F}}$ such that $(\hat{F},K)\in\Lambda$. Given $K',K\in\mathbb{K}$, we say that $K'\geq K$ if there exists $\hat{F}',\hat{F}\in\hat{\mathbf{F}}$ such that $(\hat{F}',K'),(\hat{F},K)\in\Lambda$ and $(\hat{F}',K')\geq(\hat{F},K)$.                 

Let $Q^*_K$ be the dual vector space to $Q_K$. Given $b\in Q^*_K$, we define a function
\begin{equation}
Q_K\ni a\mapsto e_b(a):=\exp(ib(a))\in\C.
\label{eb}
\end{equation}
Each finite linear combination
\[
\sum_i \alpha_ie_{b_i}
\]
with $\alpha_i\in \C$ is an {\em almost periodic function} on $Q_K$. On the space of all almost periodic functions on $Q_K$ we define a scalar product requiring that
\[
\scal{e_b}{e_{b'}}=
\begin{cases}
1 & \text{if $b=b'$ }\\
0 & \text{otherwise}
\end{cases}.
\] 
The space of the almost periodic functions may be completed with respect to the norm defined by the scalar product---the result is a (non-separable) Hilbert space which will be denoted by $\h_K$. 
 
Consider now $K'\geq K$ and a function \eqref{eb} on $Q_K$. Then by virtue of Lemma \ref{lm-pr} there exists a linear projection $\pr_{KK'}:Q_{K'}\to Q_K$ and  the pull-back
\[
\pr_{KK'}^*\,e_b=e_{\,b\,\circ\,\pr_{KK'}}
\]
is a function on $Q_{K'}$ of the sort \eqref{eb}---note that because of the linearity of $\pr_{KK'}$ the composition $b\circ\pr_{KK'}$ is an element of $Q^*_{K'}$. This means that the pull-back $\pr^*_{KK'}$ maps the orthonormal basis $\{e_b\}_{b\in Q^*_K}$ of $\h_K$ onto an orthonormal system of vectors in $\h_{K'}$. Therefore the map $U_{K'K}:\h_K\to \h_{K'}$ defined as the continuous extention of $\pr^*_{KK'}$ to the whole $\h_K$ is invertible and preserves the scalar products. 

It follows from \eqref{pr-KKK} that for each triplet $K''\geq K'\geq K$
\[
U_{K''K}=U_{K''K'}\circ U_{K'K}.
\]       
Thus $\{\h_K,U_{K'K}\}_{K\in\mathbb{K}}$ is an {\em inductive family} of Hilbert spaces and its inductive limit
\[
\h:=\underrightarrow{\lim} \,\h_K
\]   
is naturally a Hilbert space.

{So far we are not able to suggest any possible application of the Hilbert space just constructed to physical problems (see however a discussion at the end of Section \ref{D-quant}) but since it seems to be a natural by-product of the construction of the space $\D$ we decided to describe it briefly hoping that perhaps in the future the Hilbert space will turn out to be useful.}

\section{An example---the degenerate Pleba\'nski gravity \label{DPG}}

As an example we will build a space of quantum states for DPG introduced in \cite{oko-ncomp}\footnote{Let us note that in \cite{oko-ncomp} we constructed a space of quantum states for DPG, but that construction is essentially different from one we are going to describe below---the construction in \cite{oko-ncomp} is based on a directed set built without any reference to the momentum space $P$ of DPG and thus without any reference to the Poisson structure on the phase space of DPG; moreover it uses configurational elementary d.o.f. which are invariant with respect to transformations generated by a constraint of DPG.}. The Hamiltonian configuration space $Q$ of this theory is an affine space which, in a sense, is an ``almost linear'' space and therefore this example may cause a feeling of insufficiency since the general construction presented in this paper was invented to deal with non-linear Hamiltonian configuration spaces. However, carrying out the construction for DPG we {\em do not} take advantage of the affine structure of $Q$ to linearize this space. Note also that the starting point of the general construction is the presupposition that there exists a directed set $(\Lambda,\geq)$ satisfying Assumptions listed in Section \ref{ad-as}. Despite the simplicity of the space $Q$ the example illustrates quite well in which way such a directed set may be constructed; in particular, it proves usefulness of the auxiliary facts presented in Section \ref{aux}. Most importantly, methods used to construct the directed set $(\Lambda,\geq)$ for DPG when appropriately modified can be applied to build such a set for TEGR \cite{q-suit,ham-nv,q-stat} being a theory of genuinely non-linear Hamiltonian configuration space. Thus the reader may treat the construction described below as a preparatory exercise for studying the analogous but more complicated construction for TEGR presentation of which is too long to be included in this paper.  

\subsection{Outline of the theory}
 
DPG is a background independent theory of three fields defined on a four-dimensional manifold $\cal M$:
\begin{enumerate}
\item a two-form $\bld{\sigma}$; 
\item a one-form $\mathbf{A}$ which represents a connection on a trivial principal bundle ${\cal M}\times\R$, where the set $(\R,+)$ of real numbers with the addition is treated as a Lie group, that is, as the structure group of the bundle;    
\item a zero-form (a function) $\bld{\Theta}$.  
\end{enumerate}
The dynamics of the theory is given by the following action
\begin{equation}
S[\bld{\sigma},\mathbf{A},\bld{\Theta}]:=\int_{\cal M} \bld{\sigma}\wedge d\mathbf{A} -\frac{1}{2}\bld{\Theta}\, \bld{\sigma}\wedge\bld{\sigma},
\label{action}
\end{equation} 
where the exterior derivative $d\mathbf{A}$ is the curvature of the connection $\mathbf{A}$.

As described in \cite{oko-ncomp} this action is a simplification of the Pleba{\'n}ski self-dual action \cite{pleb} of general relativity and the theory under consideration coincides with so called $1+1$ degenerate sector of general relativity described in \cite{jac}. This fact justifies the name of the theory.

Assuming that ${\cal M}=\R\times\Sigma$, where $\Sigma$ is a compact three-dimensional manifold without boundary and treating $\R$ as a ``time axis''  and $\Sigma$ as a ``space'' we obtain the following Hamiltonian formulation of the theory: the space $P$ of momenta consists of all two-forms on $\Sigma$ while the Hamiltonian configuration space $Q$ coincides with the space of all connections on the trivial principal bundle $\Sigma\times\R$. The Hamiltonian reads as follows \cite{oko-ncomp}:
\[
H[\sigma, A, \alpha,\vec{N} ]=-\int_{\Sigma} \alpha\, d\sigma+\sigma\wedge(\vec{N}\lrcorner \,dA),
\]      
where $\sigma$ is a two-form playing the role of the momentum conjugate to the connection one-form $A$, $dA$ is the curvature of $A$, $\alpha$ is a function on $\Sigma$ and $\vec{N}$ a vector field on the manifold. The latter two variables are Lagrange multipliers.

Now let us proceed to the construction of a space of quantum states for DPG.

\subsection{The construction}

Let us emphasize that a construction of quantum states for the theory under considerations amounts to a construction of a directed set $(\Lambda,\geq)$ satisfying all Assumptions listed in Section \ref{ad-as} since, as already shown, each such a set unambiguously defines a space of quantum states. 

Because DPG is a background independent theory we will carry out the construction of the set $(\Lambda,\geq)$ in a background independent manner.   

\subsubsection{Submanifolds of $\Sigma$}

We start the construction by choosing elementary d.o.f.. Since we wish the construction to be background independent our choice of d.o.f. will be motivated by LQG methods (see e.g. \cite{acz,cq-diff,rev,rev-1} and references therein), where some essential functions on the phase space are associated with submanifolds of a space-like slice of a space-time. Thus each elementary d.o.f. will be associated with a submanifold of $\Sigma$. 

Assume that the manifold $\Sigma$ is {\em real analytic} and {\em oriented}.

An {\em analytic edge} is a one-dimensional connected analytic embedded submanifold of $\Sigma$ with two-point boundary. An {\em oriented} one-dimensional connected $C^0$ submanifold of $\Sigma$ given by a finite union of analytic edges will be called an {\em edge}. The set of all edges in $\Sigma$ will be denoted by $\cal E$.  

Given an edge $e$ of two-point boundary, its orientation  allows to call one of its endpoints {\em a source} and the other {\em a target} of the edge; if an edge is a loop then we distinguish one of its points and treat it simultaneously as the source and the target of the edge.

An edge $e^{-1}$ is called an {\em inverse} of an edge $e$ if $e^{-1}$ and $e$ coincide as un-oriented submanifolds of $\Sigma$ and differ by orientations.  We say that an edge $e$ is a composition of the edges $e_1$ and $e_2$, $e=e_2\circ e_1$, if $(i)$ $e$ as an oriented manifold is a union of $e_1$ and $e_2$, $(ii)$ the target of $e_1$ coincides with the source of $e_2$ and $(iii)$ $e_1\cap e_2$ consists solely of some (or all) endpoints of $e_1$ and $e_2$.               

We say that two edges are {\em independent} if the set of their common points is either empty or consists solely of some (or all) endpoints of the edges. A {\em graph} in $\Sigma$ is a finite set of pairwise independent edges. Any finite set of edges can be described in terms of edges of a graph \cite{al-hoop}: 
\begin{lm}
For every finite set $E=\{e_1,\ldots,e_N\}$ of edges there exists a graph $\gamma$ in $\Sigma$ such that every $e_i\in E$ is a composition of some edges of $\gamma$ and the inverses of some edges of the graph. The graph $\gamma$ can be chosen in such a way that
\[
\bigcup_{i=1}^Ne_i=\bigcup_{j=1}^Me'_j,
\]  
where the edges $\{e'_1,\ldots,e'_M\}$ constitute the graph $\gamma$. 
\label{E-gamma}
\end{lm}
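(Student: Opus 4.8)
The plan is to reduce the statement to the case of analytic edges and then exploit analyticity to cut all the given edges at a \emph{finite} set of points, the cut pieces becoming the edges of $\gamma$. First I would observe that every $e_i$ is by definition a finite union of analytic edges, and since $e_i$ is connected these analytic pieces can be ordered so that $e_i$ is their composition. Hence it suffices to treat the finite family $A=\{a_1,\ldots,a_L\}$ of all analytic edges occurring in the decompositions of $e_1,\ldots,e_N$: if I can produce a graph $\gamma$ such that each $a_l$ is a composition of edges of $\gamma$ and their inverses with $\bigcup_l a_l=\bigcup_j e'_j$, then the same holds for each $e_i$ and the union is preserved.

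The key geometric input, which I would take from the analytic structure of $\Sigma$, is that any two analytic edges either share a common connected subarc or meet in only finitely many points; more precisely, their intersection is a finite union of points and closed subarcs. This rests on the identity theorem for real-analytic maps: were the intersection to have an accumulation point, the two arcs would coincide on a neighbourhood of it, and analytic continuation along the (compact) edges would propagate the coincidence to a maximal common subarc. Granting this, I would define a finite set $V\subset\Sigma$ of breakpoints consisting of all sources and targets of the $a_l$, together with, for each pair $(a_l,a_m)$, all their isolated intersection points and all endpoints of their common subarcs. The finiteness of $V$ is exactly the place where analyticity is indispensable: for merely $C^\infty$ edges two curves may intersect in an infinite set with no subarc in common, and the lemma would fail.

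Next I would cut every $a_l$ at the points of $V$ lying on it, decomposing it into finitely many subarcs whose endpoints lie in $V$ and whose interiors avoid $V$. The central claim is then that the interiors of any two such subarcs $b,b'$ are either disjoint or identical: if they met at an interior point $p$, then $p$ would be a non-endpoint intersection of the underlying $a_l,a_m$, hence not isolated (else $p\in V$), so $p$ lies in a common subarc of $a_l$ and $a_m$ whose endpoints are in $V$; by maximality of the cutting this forces $b$ and $b'$ to have the same interior. Identifying subarcs with equal image and fixing one orientation for each, I obtain a finite collection $\gamma=\{e'_1,\ldots,e'_M\}$.

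Finally I would verify the two assertions. The edges of $\gamma$ have pairwise disjoint interiors, so any two of them meet only in points of $V$, which are their endpoints; thus $\gamma$ is a set of pairwise independent edges, i.e.\ a graph. Since cutting removes no points, $\bigcup_j e'_j=\bigcup_l a_l=\bigcup_i e_i$. And each $a_l$ is, by construction, the ordered composition of the subarcs into which it was cut, each of which is some $e'_j$ or its inverse according to orientation; composing along each $e_i$ then exhibits $e_i$ as a composition of edges of $\gamma$ and their inverses. The main obstacle is the finiteness of $V$, which is precisely where the real-analyticity hypothesis on $\Sigma$ and on the edges enters.
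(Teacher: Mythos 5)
Your proof is correct. The paper itself offers no proof of this lemma---it is quoted from Ashtekar--Lewandowski \cite{al-hoop}---and your argument (decompose into analytic edges, use the identity theorem to show two analytic edges meet in finitely many isolated points and finitely many common subarcs, cut at the resulting finite set of breakpoints, and reassemble) is precisely the standard argument of that reference, including the correct identification of analyticity as the indispensable hypothesis.
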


The set of all graphs in $\Sigma$ is naturally a directed set: $\gamma'\geq\gamma$ if each edge of the graph $\gamma$ is a composition of some edges of the graph $\gamma'$ and the inverses of some edges of $\gamma'$. 
 
Let $S$ be a two-dimensional embedded submanifold of $\Sigma$. Assume that $S$ is $(i)$ analytic, $(ii)$ oriented and $(iii)$ of a compact closure. We moreover require $S$ to be such that every edge $e$ can be {\em adapted} to $S$ in the following sense \cite{area}: every edge $e$ can be divided into a finite number of edges $\{e_1,\ldots,e_N\}$, i.e.
\[
e=e_N\circ e_{N-1}\circ\ldots\circ e_2\circ e_1,
\] 
each of them either
\begin{enumerate}
\item is contained in the closure $\overline{S}$; 
\item has no common points with $S$; 
\item has exactly one common point with $S$ being one of its two distinct endpoints.    
\end{enumerate}
We will call such a submanifold a {\em face}. A set of all faces in $\Sigma$ will be denoted by $\cal S$.

\subsubsection{Elementary degrees of freedom}

Every edge $e\in{\cal E}$ defines naturally a function 
\begin{equation}
Q\ni A\mapsto \kappa_e(A):=\int_eA\in\R.
\label{k-e}
\end{equation}
We choose the set $\K$ of configurational elementary d.o.f.  to be a collection of functions \eqref{k-e} given by all edges in $\Sigma$:
\[
\K:=\{\ \kappa_e \ | \ e\in{\cal E}\ \}.
\]     
Functions in $\K$ separate points in $Q$.

Note that for every edge $e\in{\cal E}$ and every pair $e_1,e_2\in{\cal E}$ such that the composition $e_2\circ e_1$ is well defined   
\begin{align}
&\kappa_{e^{-1}}=-\kappa_{e}, && \kappa_{e_2\circ e_1}=\kappa_{e_2}+\kappa_{e_1}.
\label{keke}
\end{align}

Every face $S\in{\cal S}$ defines naturally a function 
\begin{equation}
P\ni \sigma\mapsto \varphi_S(\sigma):=\int_S\sigma\in\R.
\label{phi-S}
\end{equation}      
We choose the set $\F$ of momentum elementary d.o.f  to be a collection of functions \eqref{phi-S} given by all faces in $\Sigma$:
\[
\F:=\{\ \varphi_S\ | \ S\in{\cal S}\ \}.
\]     
Functions in $\F$ separate points in $P$.

\subsubsection{Finite sets of configurational elementary d.o.f.}

Every graph $\gamma$ defines a set $K_\gamma$ of configurational elementary d.o.f.---if $\{e_1,\ldots,e_N\}$ are edges constituting the graph then 
\[
K_\gamma:=\{\kappa_{e_1},\ldots,\kappa_{e_N}\}.
\] 
\begin{lm}
Let $\gamma=\{e_1,\ldots,e_N\}$ be a graph in $\Sigma$. Then for every $(x_1,\ldots,x_N)\in\R^N$ there exists $A\in Q$ such that
\[
\kappa_{e_i}(A)=x_i.
\]    
\label{QKg-RN-lm}
\end{lm}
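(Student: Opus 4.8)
The plan is to construct the one-form $A$ by superposing contributions localized near each edge of the graph, exploiting the fact that the edges of $\gamma$ are pairwise independent. First I recall that, since the structure group $(\R,+)$ of the bundle $\Sigma\times\R$ is abelian, a connection $A\in Q$ is nothing but a (smooth, $\R$-valued) one-form on $\Sigma$, and that $A\mapsto\int_eA$ is linear. Thus it suffices to find, for each $i$, a one-form $A_i$ with $\int_{e_i}A_i=x_i$ and $\int_{e_j}A_i=0$ for $j\neq i$; the form $A:=\sum_{i=1}^N A_i$ then satisfies $\kappa_{e_i}(A)=x_i$ for all $i$.

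To localize $A_i$ I first choose a convenient point on $e_i$. Because the edges of the graph are pairwise independent, for each $j\neq i$ the intersection $e_i\cap e_j$ consists of finitely many endpoints, so the set $\bigcup_{j\neq i}(e_i\cap e_j)$ of points of $e_i$ lying on some other edge is finite. Removing these finitely many points, together with the two endpoints of $e_i$ and the finitely many junctions between the analytic sub-edges composing $e_i$, leaves a nonempty (in fact uncountable) set; I pick $p_i$ in it. Then $p_i$ is an interior point of $e_i$ at which $e_i$ is an analytic embedded curve, and $p_i\notin e_j$ for every $j\neq i$. Since $\bigcup_{j\neq i}e_j$ is compact and does not contain $p_i$, there is an open neighbourhood $U_i$ of $p_i$ disjoint from it; shrinking $U_i$ I may assume it carries coordinates $(y^1,\ldots,y^n)$ (with $n=\dim\Sigma$) in which $e_i\cap U_i$ is the segment $\{y^2=\cdots=y^n=0\}$ parametrised by $y^1$.

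On $U_i$ I set $A_i:=f_i\,dy^1$, where $f_i$ is a smooth bump supported in $U_i$ chosen so that $\int_{e_i\cap U_i}f_i\,dy^1=x_i$; concretely one may take $f_i(y)=x_i\,g(y^1)\,h(y^2,\ldots,y^n)$ with $g$ a compactly supported bump of unit integral along the segment and $h$ a bump equal to $1$ at the origin, and extend $A_i$ by zero to all of $\Sigma$. Then $\int_{e_i}A_i=x_i$ by construction, while $\int_{e_j}A_i=0$ for $j\neq i$ because $\supp A_i\subset U_i$ is disjoint from $e_j$. Summing over $i$ yields the desired $A$.

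The only genuinely delicate point is the localization carried out in the second paragraph: everything hinges on the hypothesis that $\gamma$ is a \emph{graph}, i.e. that its edges meet one another only at endpoints, which is exactly what guarantees that each $U_i$ can be chosen to avoid all the other edges. The remaining ingredients---existence of the smooth interior point $p_i$, the coordinate chart straightening $e_i$, and the bump construction---are routine consequences of $e_i$ being an analytic embedded submanifold near $p_i$.
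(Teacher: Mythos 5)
Your proof is correct. For comparison: the paper does not actually prove this lemma---it only remarks that it is ``a simple modification of a lemma proven in \cite{al-hoop}'', namely the Ashtekar--Lewandowski result that the holonomies along the edges of a graph can be prescribed arbitrarily in the structure group by a suitable smooth connection. Your argument is essentially the proof of that cited lemma transcribed to the present setting, and it becomes genuinely simpler here: because the structure group is $(\R,+)$, the ``holonomy'' $\kappa_e(A)=\int_e A$ is \emph{linear} in $A$, so the superposition $A=\sum_i A_i$ of bump one-forms settles everything, with none of the path-ordering and conjugation bookkeeping needed in the non-abelian case. Your localization step---picking $p_i$ in the interior of $e_i$ away from the finitely many points where $e_i$ meets other edges or where its analytic pieces are glued, then a chart $U_i$ disjoint from $\bigcup_{j\ne i}e_j$ straightening $e_i$---is exactly the standard device, and the appeal to pairwise independence of the edges plus compactness (hence closedness) of $\bigcup_{j\ne i}e_j$ is the right justification. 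Two cosmetic points worth fixing: choose the coordinate $y^1$ so that its orientation agrees with that of $e_i$ (otherwise you produce $-x_i$ and must flip a sign), and note explicitly that $A$ is smooth because each $f_i$ is a smooth function of compact support inside $U_i$, so that $A$ really is an element of $Q$.
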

\noindent The lemma being a simple modification of a lemma proven in \cite{al-hoop} means that for every graph $\gamma$ the image of $\tilde{K}_\gamma$ given by \eqref{k-inj} is $\R^N$, where $N$ is the number of edges of $\gamma$. In other words,
\begin{equation}
Q_{K_\gamma}\cong\R^N.
\label{QKg-RN}
\end{equation}
Consequently,  each $K_{\gamma}$ consists of independent d.o.f. and  $Q_{K_\gamma}$ is a reduced configuration space. Note that each one-element subset $K=\{\kappa_e\}$ of $\K$ is of the form $K_{\gamma}$ with $\gamma=\{e\}$.   
\begin{lm}
Let $K$ be a finite set of configurational elementary d.o.f.. Then there exists a graph $\gamma$ such that each d.o.f. in $K$ is a linear combination of d.o.f. in $K_{\gamma}$.
\label{f-Kg}
\end{lm}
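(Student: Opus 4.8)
The plan is to reduce the statement to the graph-decomposition Lemma~\ref{E-gamma} and then exploit the additivity of the assignment $e\mapsto\kappa_e$ encoded in \eqref{keke}.

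First I would write the given finite set as $K=\{\kappa_{e_1},\ldots,\kappa_{e_n}\}$ for some edges $e_1,\ldots,e_n\in\E$, and form the finite set of edges $E=\{e_1,\ldots,e_n\}$. Applying Lemma~\ref{E-gamma} to $E$ yields a graph $\gamma=\{e'_1,\ldots,e'_M\}$ with the property that each $e_i$ is a composition of some edges of $\gamma$ and inverses of edges of $\gamma$; that is, for every $i$ there is a sequence $f^{(i)}_1,\ldots,f^{(i)}_{k_i}$, with each $f^{(i)}_l$ equal to some $e'_j$ or to some $(e'_j)^{-1}$, such that $e_i=f^{(i)}_{k_i}\circ\cdots\circ f^{(i)}_1$. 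This is where all the genuine geometric content sits, but it is already available as a cited result.

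Next I would convert this geometric decomposition into a linear relation among the d.o.f. Iterating the second identity in \eqref{keke} gives $\kappa_{e_i}=\sum_{l=1}^{k_i}\kappa_{f^{(i)}_l}$, and the first identity in \eqref{keke} replaces each $\kappa_{(e'_j)^{-1}}$ by $-\kappa_{e'_j}$. Collecting terms, each edge $e'_j$ of $\gamma$ contributes with an integer coefficient $c_{ij}\in\Z$ equal to the net signed multiplicity with which $e'_j$ occurs in the decomposition of $e_i$ (forward occurrences counted $+1$, inverse occurrences $-1$), so that $\kappa_{e_i}=\sum_{j=1}^M c_{ij}\kappa_{e'_j}$. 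Since $K_\gamma=\{\kappa_{e'_1},\ldots,\kappa_{e'_M}\}$, this exhibits every $\kappa_{e_i}\in K$ as a linear combination of d.o.f. in $K_\gamma$, which is exactly the claim.

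There is essentially no hard part beyond invoking Lemma~\ref{E-gamma}: the remainder is purely algebraic bookkeeping enabled by the homomorphism property \eqref{keke}. The one point deserving a moment's care is that the second relation in \eqref{keke} is stated for a composition of two edges, so for a composition of several edges one must apply it repeatedly; this is legitimate because each partial composition $f^{(i)}_l\circ\cdots\circ f^{(i)}_1$ is again a well-defined edge, as guaranteed by the notion of composition used in Lemma~\ref{E-gamma}.
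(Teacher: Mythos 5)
Your proof is correct and follows exactly the paper's route: the paper's own proof consists of the single remark that the lemma ``follows immediately from Lemma \ref{E-gamma} and Equations \eqref{keke}'', and your argument is precisely that reduction, spelled out with the bookkeeping of signed multiplicities made explicit. Nothing is missing, and your closing observation about iterating the two-edge composition identity is a fair (if minor) point of care that the paper leaves implicit.
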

\begin{proof}
The lemma follows immediately from Lemma \ref{E-gamma} and Equations \eqref{keke}.
\end{proof}

Now we are going to state and prove a lemma which will be relevant not only for the construction of quantum states for DPG but also for TEGR \cite{q-stat}. Therefore the lemma will be formulated in a quite general manner.

\begin{lm}
Let $\Omega$ be a set of one-forms on $\Sigma$ such that for every graph $\gamma_0=\{e_1,\ldots,$ $e_{N_0}\}$ and for each $(x_1,\ldots,x_{N_0})\in\R^{N_0}$ there exists $\varpi\in\Omega$ such that
\[
\kappa_{e_i}(\varpi)=x_i.
\]    
Then $\gamma'\geq\gamma$ if and only if each d.o.f. in $K_\gamma$ restricted to $\Omega$  is a linear combination of d.o.f. in $K_{\gamma'}$ restricted to $\Omega$.
\label{g'g-lin-gen}  
\end{lm}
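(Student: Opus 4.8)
The plan is to prove the two implications separately and to isolate the single property of $\Omega$ that matters. By the hypothesis, applied to an arbitrary graph $\gamma_0=\{e_1,\ldots,e_{N_0}\}$, the evaluation map $\Omega\ni\varpi\mapsto(\kappa_{e_1}(\varpi),\ldots,\kappa_{e_{N_0}}(\varpi))$ is onto $\R^{N_0}$, so the functions $\{\kappa_{e_i}|_\Omega\}$ are linearly independent. The forward implication is then immediate and uses nothing further about $\Omega$: if $\gamma'\geq\gamma$, each edge $e$ of $\gamma$ is by definition a composition of edges of $\gamma'$ and their inverses, so the relations \eqref{keke} express $\kappa_e$ as an integer combination of the d.o.f. in $K_{\gamma'}$ on all of $Q$, hence also after restriction to $\Omega$.

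For the converse I would first reduce everything to a common refinement. Applying Lemma \ref{E-gamma} to the finite set of all edges of $\gamma$ and $\gamma'$ produces a graph $\tilde\gamma=\{\tilde e_1,\ldots,\tilde e_M\}$ refining both, with $\bigcup_{e\in\gamma}e\cup\bigcup_{e'\in\gamma'}e'=\bigcup_k\tilde e_k$. By \eqref{keke} there are incidence coefficients $a_{e,k},b_{e',k}\in\{-1,0,1\}$ with $\kappa_e=\sum_k a_{e,k}\kappa_{\tilde e_k}$ and $\kappa_{e'}=\sum_k b_{e',k}\kappa_{\tilde e_k}$ on all of $Q$, where $a_{e,k}\neq0$ exactly when $\tilde e_k\subset e$ (similarly for $b$) and $\bigcup_{k:\,a_{e,k}\neq0}\tilde e_k=e$. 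Substituting these expansions into the assumed relation $\kappa_e|_\Omega=\sum_{e'}c_{e,e'}\kappa_{e'}|_\Omega$ and invoking the linear independence of $\{\kappa_{\tilde e_k}|_\Omega\}$ noted above, I would promote the restricted relation to the \emph{exact} identity $a_{e,k}=\sum_{e'}c_{e,e'}b_{e',k}$ for every $k$. This is the only place the hypothesis on $\Omega$ enters.

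The hard part is turning this algebraic relation back into the geometric statement $\gamma'\geq\gamma$, and here I would lean on the fact that $\gamma'$ is a \emph{graph}: its edges are pairwise independent, so each piece $\tilde e_k$ lies in at most one edge of $\gamma'$. Fixing $e\in\gamma$, for every piece $\tilde e_k\subset e$ one has $a_{e,k}\neq0$, which forces some $b_{e',k}\neq0$, so $\tilde e_k$ sits in a unique $e'\in\gamma'$. The main obstacle I anticipate is ruling out that such an $e'$ ``sticks out'' of $e$: were $e'$ to contain a further piece $\tilde e_m\not\subset e$, then $a_{e,m}=0$ together with the uniqueness just noted would give $c_{e,e'}b_{e',m}=0$, hence $c_{e,e'}=0$; but then the column of the piece $\tilde e_k\subset e\cap e'$ we started from yields $a_{e,k}=c_{e,e'}b_{e',k}=0$, contradicting $a_{e,k}\neq0$. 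Thus every edge of $\gamma'$ that meets $e$ in a piece lies entirely in $e$, so $e=\bigcup\{e'\in\gamma':e'\subset e\}$.

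Finally I would observe that these $\gamma'$-edges are pairwise independent and tile the connected one--dimensional submanifold $e$, so they appear along $e$ as consecutive arcs; reading them off from source to target, reversing orientations where needed, exhibits $e$ as a composition of edges of $\gamma'$ and their inverses. Since $e\in\gamma$ was arbitrary, $\gamma'\geq\gamma$, completing the nontrivial direction.
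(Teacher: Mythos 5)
Your proof is correct, but it takes a genuinely different route from the paper's. The paper fixes an edge $e_i\in\gamma$ and uses the hypothesis on $\Omega$ repeatedly and directly: for each relevant configuration of a test sub-edge $\bar e_2$ (a piece of $e_i$ disjoint from all of $\gamma'$; a piece of some $e'_j$ disjoint from $e_i$; a piece of some $e'_j$ contained in $e_i$) it constructs, via Lemma \ref{E-gamma}, an auxiliary graph $\gamma''$ and picks $\varpi\in\Omega$ with $\kappa_{\bar e_2}(\varpi)=1$ and $\kappa_{e''}(\varpi)=0$ for all edges of $\gamma''$; evaluating the assumed relation at these $\varpi$ forces $e_i\subset e'_1\cup\ldots\cup e'_{N'}$, shows each $e'_j$ either lies in $e_i$ or meets it in finitely many points, and pins the coefficients $B_i{}^j$ down to $0$ or $\pm1$. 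You instead pass once to a common refinement $\tilde\gamma$ of $\gamma\cup\gamma'$, expand both families of d.o.f. in incidence coefficients relative to $\tilde\gamma$, and invoke the hypothesis on $\Omega$ exactly once---to get linear independence of $\{\kappa_{\tilde e_k}|_\Omega\}$ and thereby promote the restricted relation to the exact identities $a_{e,k}=\sum_{e'}c_{e,e'}b_{e',k}$---after which the argument is pure combinatorics of incidence: each piece lies in at most one edge of $\gamma'$ (pairwise independence of a graph's edges), and an edge of $\gamma'$ containing a piece of $e$ cannot also contain a piece outside $e$, by the single algebraic contradiction you give. This localization of the role of $\Omega$ to one linear-independence statement, replacing the paper's repeated constructions of test forms and its case analysis, is arguably cleaner; the paper's more hands-on version has the minor advantage of exhibiting the coefficients explicitly. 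Both arguments rest on the same outer ingredients: Lemma \ref{E-gamma}, the relations \eqref{keke}, finiteness of intersections of distinct analytic edges (which is what underwrites your claim that $a_{e,k}\neq 0$ exactly when $\tilde e_k\subset e$, and similarly for $b$), and the final geometric step of reading off the tiling of $e$ by $\gamma'$-edges as a composition---a step you treat at essentially the same level of rigor as the paper does.
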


Before we will prove the lemma let us emphasize that the configurational elementary d.o.f. are defined on $Q$ which can be naturally identified with the set of all one-forms on $\Sigma$. Therefore the d.o.f. are well defined on the set $\Omega$ introduced in the lemma.   

\begin{proof}[Proof of Lemma \ref{g'g-lin-gen}]
Since now till the end of the proof we will use exclusively the configurational d.o.f. {\em restricted to} $\Omega$ without mentioning this and without any change of the notation.

Suppose that $\gamma'\geq\gamma$. It is enough to apply Equations \eqref{keke} to the definition of the relation $\geq$ to conclude that each d.o.f. in $K_\gamma$ is a linear combination of $K_{\gamma'}$.

Assume now that each d.o.f. in $K_\gamma$ is a linear combination of $K_{\gamma'}$, 
\begin{equation}
\kappa_{e_i}=B_i{}^j\kappa_{e'_j},
\label{keBke-0}
\end{equation}
where $\gamma=\{e_1,\ldots,e_N\}$, $\gamma'=\{e'_1,\ldots,e'_{N'}\}$ and $\{B_i{}^j\}$ are real numbers.

Let us fix an edge $e_i\in\gamma$. Now we will prove the following two statements (note that both ignore orientations of edges):
\begin{enumerate}
\item $e_i$ is contained in the union of all edges of $\gamma'$;  
\item every edge  $e'_j$ of $\gamma'$ either is contained in $e_i$ or the set of their common points is finite.
\end{enumerate}

To prove Statement 1 assume that $e_i$  is a composition of some edges $\bar{e}_{1},\bar{e}_2$ and $\bar{e}_3$:
\[
e_i=\bar{e}_3\circ \bar{e}_2\circ \bar{e}_{1}
\]
such that
\[
\bar{e}_2\cap(e'_1\cup\ldots\cup e'_{N'})=\varnothing.
\]
Lemma \ref{E-gamma} ensures us that there exists a graph $\gamma''$ such that each $(i)$ edge in $\{\bar{e}_{1},\bar{e}_3\}\cup\gamma'$ is a combination of edges in $\gamma''$ and their inverses and $(ii)$ $\{\bar{e}_2\}\cup\gamma''$ is a graph. There exists $\varpi\in \Omega$ such that $\kappa_{\bar{e}_2}(\varpi)=1$ and $\kappa_{e''}(\varpi)=0$ for every edge $e''$ of $\gamma''$. This implies that $\kappa_{e_i}(\varpi)=1$ and $\kappa_{e'_j}(\varpi)=0$ for every edge $e'_j$ of $\gamma'$. But this is in contradiction with \eqref{keBke-0}, hence
\[
e_i\subset(e'_1\cup\ldots\cup e'_{N'}).
\]   

To prove Statement 2 let us fix $e'_j\in\gamma'$ and assume that it is a composition of some edges $\bar{e}_{1},\bar{e}_2$ and $\bar{e}_3$ 
\[
e'_j=\bar{e}_3\circ \bar{e}_2\circ \bar{e}_{1}
\]
such that
\[
\bar{e}_2\cap e_i =\varnothing.
\]
Let $\gamma''$ be a graph such that $(i)$ each edge in $\{e_i,\bar{e}_{1},\bar{e}_3\}\cup(\gamma'\setminus\{e'_j\})$ is a combination of edges in $\gamma''$ and their inverses and $(ii)$ $\{\bar{e}_2\}\cup\gamma''$ is a graph. There exists $\varpi\in \Omega$ such that $\kappa_{\bar{e}_2}(\varpi)=1$ and $\kappa_{e''}(\varpi)=0$ for every edge $e''$ of $\gamma''$. This means that $\kappa_{e_i}(\varpi)=0$, $\kappa_{e'_j}(\varpi)=1$ and $\kappa_{e'_l}(\varpi)=0$ for every $e'_l\in \gamma'\setminus\{e'_j\}$. Setting this $\varpi$ to \eqref{keBke-0} we obtain       
\begin{equation}
0=B_i{}^j.
\label{b-0}
\end{equation}

Suppose now that the same $e'_j$ is a composition of some edges $\bar{e}_{1},\bar{e}_2$ and $\bar{e}_3$ such that $\bar{e}_2\subset e_i$. Then $e_i$ is a composition of $\bar{e}_2$ and some other edges. Let $E$ be a set consisting of $\bar{e}_{1}$, $\bar{e}_3$, the edges of $\gamma'$ except $e'_j$ and the edges composing $e_i$ except $\bar{e}_2$. Let $\gamma''$ be a graph such that $(ii)$ each edge in $E$ is a combination of edges in $\gamma''$ and their inverses and $(ii)$  $\{\bar{e}_2\}\cup\gamma''$ is a graph. There exists $\varpi\in \Omega$ such that $\kappa_{\bar{e}_2}(\varpi)=1$ and $\kappa_{e''}(\varpi)=0$ for every edge $e''$ of $\gamma''$. Consequently, $\kappa_{e'_j}(\varpi)=1$, $\kappa_{e_i}(\varpi)=1$ (if the orientation of $\bar{e}_2$ coincides with that of $e_i$) or $\kappa_{e_i}(\varpi)=-1$ (otherwise) and $\kappa_{e'_l}(\varpi)=0$ for every $e'_l\in \gamma'\setminus\{e'_j\}$. Setting this $\varpi$ to \eqref{keBke-0} we obtain       
\begin{equation}
\pm1=B_i{}^j.
\label{b-1}
\end{equation}

Suppose finally that the same $e'_j$ is a composition of edges such that one of them is contained in $e_i$ and an other one has no common points with $e_i$. But this cannot be true since then Equations \eqref{b-0} and \eqref{b-1} would have to hold simultaneously. This means that either $e'_j$ is contained in $e_i$ or $e_i\cap e'_j$ does not contain any edge---in the latter case $e_i\cap e'_j$ consists of finite number of points because both $e_i$ and $e'_j$ are unions of {\em analytic} edges \cite{al-hoop}.      
   
Thus we justified both Statements. They mean that  $e_i$ as an un-oriented submanifold of $\Sigma$ is a union of some edges of $\gamma'$ regarded as un-oriented submanifolds. Taking into account the orientations of the edges under consideration we see that $e_i$ is a composition of some edges of $\gamma'$ and the inverses of some edges of the graph. Hence $\gamma'\geq\gamma$.      
\end{proof}

\begin{cor}
$\gamma'\geq\gamma$ if and only if each d.o.f. in $K_\gamma$ is a linear combination of d.o.f. in $K_{\gamma'}$.
\label{g'g-lin}  
\end{cor}
\begin{proof}
The configurational d.o.f. are defined on $Q$. This fact and Lemma \ref{QKg-RN-lm} allow to set $\Omega=Q$ in Lemma \ref{g'g-lin-gen}.
\end{proof}

An important observation is that by virtue of Lemmas \ref{QKg-RN-lm} and \ref{f-Kg} a subset of $\mathbf{K}$ consisting of all sets $K_{\gamma}$, where $\gamma$ runs through all graphs in $\Sigma$, meets the condition imposed on the set $\mathbf{K'}$ by Proposition \ref{big-pro}. Thus Assertions 1 of the proposition holds for all reduced configuration spaces which means in particular that the space $\Cyl$ given by the configurational elementary d.o.f. \eqref{k-e} is well defined and Assertion 2 holds for cylindrical functions compatible with elements of $\mathbf{K}$. Moreover, according to Assertion 3 for every $\Psi\in\Cyl$ there exists a graph $\gamma$ such that $\Psi$ is a cylindrical function compatible with $K_{\gamma}$.

\subsubsection{Operators corresponding to momentum d.o.f.}

As shown in \cite{acz} each elementary d.o.f. $\varphi_S\in\F$ defines via a suitably regularized Poisson bracket a linear operator $\hat{\varphi}_S$ acting on cylindrical functions. We concluded a while ago that every $\Psi\in\Cyl$ is of the form $\Psi=\pr_{K_\gamma}^*\psi$ for a graph $\gamma=\{e_1,\ldots,e_N\}$  and a smooth complex function $\psi$ on $Q_{K_\gamma}$.  The operators $\hat{\varphi}_S$ acts on $\Psi$ as follows:   
\begin{equation}
\hat{\varphi}_S\Psi:=\sum_{i=1}^N\eps(S,e_i)\, \pr_{K_\gamma}^*\partial_{x_i}\psi,
\label{hphi_S}
\end{equation}
where $(i)$ $\{\partial_{x_i}\}$ are vector fields on $Q_{K_\gamma}$ given by a coordinate frame $(x_i)$ defined by the d.o.f. in $K_\gamma$ according to \eqref{lin-coor} and $(ii)$ $\eps(S,e_i)$ is a real number associated with the edge $e_i$ according to the following prescription.      

Adapting the edge $e_i$ to $S$ we obtain a set of edges $\{e_{i1},\ldots,e_{in}\}$ and define a function $\eps$ on this set: $\eps(e_{ia})=0$ if $e_{ia}$ is contained in $\bar{S}$ or has no common points with $S$; otherwise           
\begin{enumerate}
\item $\eps(e_{ia}):=\frac{1}{2}$ if $e_{ia}$ is either 'outgoing' from $S$ and placed 'below' the face  or is 'incoming' to $S$ and placed 'above' the face;
\item $\eps(e_{ia}):=-\frac{1}{2}$ if $e_{ia}$ is either 'outgoing' from $S$ and placed 'above' the face  or is 'incoming' to $S$ and placed 'below' the face. 
\end{enumerate}
Here the terms 'outgoing' and 'ingoing' refer to the orientation of the edges (which is inherited from the orientation of $e_i$) while the terms 'below' and 'above' refer to the orientation of the normal bundle of $S$ defined naturally by the orientations of $S$ and $\Sigma$. Then we define
\[
\eps(S,e_i):=\sum_{a=1}^n \eps(e_{ia}).
\]       

Note finally that for every $S\in{\cal S}$ and every $e\in{\cal E}$  
\begin{equation}
\hat{\varphi}_S\kappa_{e}=\eps(S,e)
\label{hphiS-ke}
\end{equation}
which means that $\hat{\varphi}_S\kappa_{e}$ is a real {\em constant} cylindrical function. 

Operators $\{\ \hat{\varphi} \ | \ \varphi\in\F\ \}$ span the real linear space $\hat{\F}$. Any element $\hat{\varphi}\in \hat{\F}$ can be expressed as
\[
\hat{\varphi}=\sum_i \alpha_i\hat{\varphi}_{S_i},
\]  
where $\{\alpha_i\}$ are real numbers and the sum is finite. Suppose that $\Psi=\pr_{K_\gamma}\psi$ is a cylindrical function compatible with $K_\gamma$. By virtue of  Equations \eqref{hphi_S} and \eqref{hphiS-ke} the action of the operator $\hat{\varphi}$ on $\Psi$ reads as follows:
\begin{equation}
\hat{\varphi}\Psi=\sum_{ij}\alpha_i\eps(S_i,e_j)\, \pr_{K_\gamma}^*\partial_{x_j}\psi=\sum_i\Big(\pr_{K_\gamma}^*\partial_{x_i}\psi\Big)\,\hat{\varphi}\kappa_{e_i}.
\label{hphi-Psi}
\end{equation}

\subsubsection{A directed set $\Lambda$}

Now we are going to choose a directed set $(\Lambda,\geq)$. Let $\hat{F}$ be an element of $\hat{\mathbf{F}}$ and $K=\{\kappa_1,\ldots,\kappa_{N}\}$ an element of $\mathbf{K}$. A pair $(\hat{F},K)$ is said to be {\em non-degenerate} if $\dim\hat{F}=N$ and an $(N\times N)$-matrix $G=(G_{ji})$ of components          
\begin{equation}
G_{ji}:=\hat{\varphi}_{j}\kappa_{i},
\label{matr-G}
\end{equation}
where $(\hat{\varphi}_1,\ldots,\hat{\varphi}_{N})$ is a basis of $\hat{F}$, is non-degenerate.

\begin{df}
The set $\Lambda$ is a set of all non-degenerate pairs $(\hat{F},K_{{\gamma}})\in\hat{\mathbf{F}}\times \mathbf{K}$, where ${\gamma}$ runs through all graphs in $\Sigma$.    
\label{df-Lambda}
\end{df}

\begin{lm} 
For every graph $\gamma$ in $\Sigma$ there exists $\hat{F}\in\hat{\mathbf{F}}$ such that $(\hat{F},K_\gamma)\in\Lambda$.
\label{every-g}
\end{lm}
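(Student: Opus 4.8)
The plan is to construct, for the given graph $\gamma=\{e_1,\ldots,e_N\}$, a collection of $N$ faces $S_1,\ldots,S_N$ whose flux operators $\hat{\varphi}_{S_1},\ldots,\hat{\varphi}_{S_N}$ span a subspace $\hat{F}$ making the pair $(\hat{F},K_\gamma)$ non-degenerate in the sense of Definition \ref{df-Lambda}. The guiding idea is to arrange the faces so that the matrix $G_{ji}=\hat{\varphi}_{S_j}\kappa_{e_i}=\eps(S_j,e_i)$ (see \eqref{hphiS-ke} and \eqref{matr-G}) is diagonal with nonvanishing diagonal entries; this yields at once both $\dim\hat{F}=N$ and the non-degeneracy of $G$.

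First I would carry out the geometric construction. Since $\gamma$ is a graph, its edges are pairwise independent, so for $j\neq i$ the intersection $e_i\cap e_j$ consists only of endpoints of the edges. Hence for each $i$ I can choose a point $p_i$ lying in the relative interior of one of the analytic edges composing $e_i$ (in particular away from the two endpoints of $e_i$), so that $p_i\notin e_j$ for every $j\neq i$ and so that $e_i$ is a single embedded analytic arc near $p_i$. As each $e_j$ has compact closure, $p_i$ has positive distance to $\bigcup_{j\neq i}e_j$. I would then take $S_i$ to be a small oriented analytic two-dimensional disk through $p_i$, transversal to $e_i$, chosen small enough to be disjoint from all $e_j$ with $j\neq i$ and to meet $e_i$ only at $p_i$. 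Being analytic, oriented and of compact closure, $S_i$ is a face: every edge can be adapted to it by the standard analyticity argument (an analytic edge either lies in $S_i$ or meets it in finitely many points), cf. \cite{area}.

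Next I would evaluate the numbers $\eps(S_j,e_i)$. For $j\neq i$ the face $S_j$ is disjoint from $e_i$, so adapting $e_i$ to $S_j$ produces only sub-edges with no common point with $S_j$, whence $\eps(S_j,e_i)=0$. For $j=i$ the edge $e_i$ crosses $S_i$ transversally at the single interior point $p_i$; adapting $e_i$ to $S_i$ splits it at $p_i$ into two sub-edges lying on opposite sides of the face, one incoming to and one outgoing from $S_i$, whose contributions carry the \emph{same} sign and therefore add rather than cancel, giving $\eps(S_i,e_i)=\pm1\neq0$. Thus $G_{ji}=\eps(S_j,e_i)$ is diagonal with nonvanishing diagonal, hence non-degenerate.

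Finally I would set $\hat{F}:=\spn_\R\{\hat{\varphi}_{S_1},\ldots,\hat{\varphi}_{S_N}\}\in\hat{\mathbf{F}}$. Any relation $\sum_j a_j\hat{\varphi}_{S_j}=0$, applied to the functions $\kappa_{e_i}$, forces $\sum_j a_jG_{ji}=0$ for all $i$, so the non-degeneracy of $G$ gives $a_j=0$; the operators are therefore linearly independent and $\dim\hat{F}=N$. Together with the non-degeneracy of $G$ this shows that $(\hat{F},K_\gamma)$ is a non-degenerate pair, i.e. $(\hat{F},K_\gamma)\in\Lambda$. I expect the only genuinely delicate point to be the geometric step---guaranteeing that the transversal disk $S_i$ can be taken analytic, oriented, of compact closure and disjoint from the remaining edges while crossing $e_i$ exactly once---together with the verification that a single transversal crossing produces a nonzero $\eps$; both rest on standard LQG facts about edges adapted to surfaces.
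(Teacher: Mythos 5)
Your proposal is correct and takes essentially the same route as the paper: the paper likewise uses the independence of the edges of $\gamma$ to produce faces $\{S_1,\ldots,S_N\}$ with $e_i\cap S_j=\varnothing$ for $j\neq i$ and a single interior intersection point for $j=i$, chooses orientations so that $G_{ji}=\hat{\varphi}_{S_j}\kappa_{e_i}=\delta_{ji}$, and sets $\hat{F}:={\rm span}\,\{\hat{\varphi}_{S_1},\ldots,\hat{\varphi}_{S_N}\}$. Your write-up merely spells out the geometric construction of the transversal disks, the computation $\eps(S_i,e_i)=\pm 1$ (which the paper normalizes to $+1$ by an orientation choice), and the resulting linear independence giving $\dim\hat{F}=N$, all of which the paper leaves implicit.
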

\begin{proof}
The independence of edges $\{e_1,\ldots,e_N\}$ of the graph $\gamma$ imply that there exists a set $\{S_1,\ldots,S_N\}$ of faces such that $e_i\cap S_j$ is empty if $i\neq j$ and consists of exactly one point distinct from the endpoints of $e_i$ if $i=j$. The orientations of the faces can be chosen in such a way that
\[
G_{ji}=\hat{\varphi}_{S_j}\kappa_{e_i}=\delta_{ji}.
\]
Now it is enough to define $\hat{F}:={\rm span}\,\{\hat{\varphi}_{S_1},\ldots,\hat{\varphi}_{S_N}\}$.      
\end{proof}

Now we are at the point to define  a relation $\geq$ on $\Lambda$: 
\begin{df}
Let $(\hat{F}',K_{\gamma'}),(\hat{F},K_\gamma)\in\Lambda$. We say that $(\hat{F}',K_{\gamma'})\geq (\hat{F},K_\gamma)$ if 
\begin{align*}
&\hat{F}'\supset\hat{F} && \text{and} && \gamma'\geq\gamma.         
\end{align*}
\label{df-Lambda->}
\end{df}

\begin{lm}
$(\Lambda,\geq)$ is a directed set.
\label{Lambda-dir}
\end{lm}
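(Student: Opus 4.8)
The plan is to check that $\geq$ is a preorder on $\Lambda$ and then to establish the upper-bound property, which is the substantive part. Reflexivity and transitivity are immediate: $(\hat{F},K_\gamma)\geq(\hat{F},K_\gamma)$ since $\hat{F}\supset\hat{F}$ and every edge of $\gamma$ is trivially a composition of edges of $\gamma$, so $\gamma\geq\gamma$; and if $(\hat{F}'',K_{\gamma''})\geq(\hat{F}',K_{\gamma'})\geq(\hat{F},K_\gamma)$ then $\hat{F}''\supset\hat{F}$ by transitivity of inclusion and $\gamma''\geq\gamma$ because the relation on graphs is itself a preorder. Thus everything reduces to directedness.

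To prove directedness I would fix $\lambda_1=(\hat{F}_1,K_{\gamma_1})$ and $\lambda_2=(\hat{F}_2,K_{\gamma_2})$ in $\Lambda$ and first pick a basis $\{\hat{\varphi}_1,\ldots,\hat{\varphi}_m\}$ of the subspace $\hat{F}_1+\hat{F}_2\subset\hat{\F}$, so that $m=\dim(\hat{F}_1+\hat{F}_2)$ and these operators are linearly independent in $\hat{\F}$. Proposition \ref{Lambda-pr} applies to the set $\Lambda$ of Definition \ref{df-Lambda} (its hypotheses, Assumptions \ref{k-Lambda} and \ref{comp-f}, hold by Lemmas \ref{f-Kg}, \ref{every-g} and formula \eqref{hphi-Psi}), and it produces a graph $\gamma_0$ on which $\{\hat{\varphi}_1,\ldots,\hat{\varphi}_m\}$ remain linearly independent when restricted to $K_{\gamma_0}$. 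Since the graphs in $\Sigma$ form a directed set, I then choose a graph $\gamma_3$ with $\gamma_3\geq\gamma_0$, $\gamma_3\geq\gamma_1$ and $\gamma_3\geq\gamma_2$, and let $\{e'_1,\ldots,e'_{N_3}\}$ be its edges.

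By Corollary \ref{g'g-lin} each d.o.f. in $K_{\gamma_0}$ is a linear combination of d.o.f. in $K_{\gamma_3}$; together with the linearity of the operators and Assumption \ref{const} this transfers linear independence from $K_{\gamma_0}$ to $K_{\gamma_3}$, so the coordinate vectors $(\hat{\varphi}_j\kappa_{e'_1},\ldots,\hat{\varphi}_j\kappa_{e'_{N_3}})\in\R^{N_3}$ are linearly independent and in particular $m\leq N_3$. On the other hand Lemma \ref{every-g} provides some $(\hat{F}',K_{\gamma_3})\in\Lambda$, whose non-degeneracy means exactly that the coordinate map $\hat{\F}\ni\hat{\varphi}\mapsto(\hat{\varphi}\kappa_{e'_i})_i\in\R^{N_3}$ is onto. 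Hence I can adjoin operators $\hat{\psi}_{m+1},\ldots,\hat{\psi}_{N_3}\in\hat{\F}$ whose coordinate vectors complete those of the $\hat{\varphi}_j$ to a basis of $\R^{N_3}$, and set $\hat{F}_3:=\spn_\R\{\hat{\varphi}_1,\ldots,\hat{\varphi}_m,\hat{\psi}_{m+1},\ldots,\hat{\psi}_{N_3}\}$. The coordinate vectors of these $N_3$ operators form a basis of $\R^{N_3}$, so $\dim\hat{F}_3=N_3$ and the matrix $G$ of \eqref{matr-G} is non-degenerate; thus $(\hat{F}_3,K_{\gamma_3})\in\Lambda$. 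Since $\hat{F}_3\supset\hat{F}_1+\hat{F}_2$ and $\gamma_3\geq\gamma_1$, $\gamma_3\geq\gamma_2$, the pair $(\hat{F}_3,K_{\gamma_3})$ dominates both $\lambda_1$ and $\lambda_2$, proving directedness.

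The main obstacle is not finding a common refinement of the graphs — that is free — but producing an upper bound that genuinely lies in $\Lambda$, i.e. a \emph{non-degenerate} pair: one cannot simply combine $\hat{F}_1+\hat{F}_2$ with an arbitrary refinement $\gamma_3$, since nothing guarantees a priori that $\dim(\hat{F}_1+\hat{F}_2)$ matches the number of edges of $\gamma_3$ or that the associated $G$ is invertible. The two devices that resolve this are the refinement step, which keeps the chosen operators linearly independent over $K_{\gamma_3}$ (giving $m\leq N_3$ and injectivity of the coordinate map on $\hat{F}_1+\hat{F}_2$), and the surjectivity of that coordinate map supplied by Lemma \ref{every-g}, which lets one pad the operators up to dimension exactly $N_3$ while keeping non-degeneracy.
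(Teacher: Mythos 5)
Your proof is correct and follows essentially the same route as the paper's: reduce the problem to directedness, invoke Proposition \ref{Lambda-pr} (after checking Assumptions \ref{k-Lambda} and \ref{comp-f} via Lemmas \ref{f-Kg}, \ref{every-g} and formula \eqref{hphi-Psi}) to obtain a graph on which a basis of $\hat{F}_1+\hat{F}_2$ stays linearly independent, pass to a common refinement of all the graphs involved, and then pad the operators up to a non-degenerate pair using the dual operators supplied by Lemma \ref{every-g}. The only difference is cosmetic: where the paper performs explicit row reduction of the rectangular matrix $G^0$ and adjoins specific operators $\hat{\varphi}^0_{M+1},\ldots,\hat{\varphi}^0_N$, you argue abstractly that the coordinate map $\hat{\F}\to\R^{N_3}$ is surjective and complete the linearly independent coordinate vectors to a basis, which also lets you drop the paper's (unnecessary) requirement that the refined graph have strictly more edges than $\dim(\hat F_1+\hat F_2)$.
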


\begin{proof}
 The transitivity of the relation $\geq$ is obvious. Thus it has to be proven only that for any two elements $\lambda',\lambda\in\Lambda$ there exists $\lambda''\in\Lambda$ such that $\lambda''\geq\lambda'$ and $\lambda''\geq\lambda$.

To prove this we are going to use Proposition \ref{Lambda-pr}. Therefore we have to show first that the set $\Lambda$ satisfy Assumptions \ref{k-Lambda} and \ref{comp-f}. The set meets Assumption \ref{k-Lambda} due to Lemmas \ref{f-Kg} and \ref{every-g}. On the other hand the formula \eqref{hphi-Psi} imply immediately that Assumption \ref{comp-f} is satisfied.

Let us fix $\lambda'=(\hat{F}',K_{\gamma'})$ and $\lambda=(\hat{F},K_\gamma)\in\Lambda$. We define $\hat{F}^0$ as a linear subspace of $\hat{\F}$ spanned by elements of $\hat{F}'$ and $\hat{F}$ and choose a basis $(\hat{\varphi}_1,\ldots,\hat{\varphi}_M)$ of $\hat{F}^0$. Proposition \ref{Lambda-pr} and the definition of $\Lambda$ guarantee that there exists a graph $\gamma^0$ such that the operators $(\hat{\varphi}_1,\ldots,\hat{\varphi}_M)$ remain  linearly independent when restricted to $K_{\gamma^0}$. Let $\gamma''$ be a graph such that  the number of its edges is greater than $\dim \hat{F}^0=M$ and $\gamma''\geq \gamma^0,\gamma',\gamma$. By virtue of Corollary \ref{g'g-lin} d.o.f. in $K_{\gamma^0}$ are cylindrical functions compatible with $K_{\gamma''}$ and, according to Lemma \ref{cyl-K}, the operators $(\hat{\varphi}_1,\ldots,\hat{\varphi}_M)$  are linearly independent on $K_{\gamma''}$.     

Consider now a matrix $G^0$ of components
\[
G^0_{ji}:=\hat{\varphi}_j\kappa_{e_i},
\]          
where $\{e_1,\ldots,e_N\}$ constitute the graph $\gamma''$.  Clearly, the matrix has $M$ rows and $N$ columns and because the operators $(\hat{\varphi}_1,\ldots,\hat{\varphi}_M)$ are linearly independent on $K_{\gamma''}$ its rank is equal $M<N$. Using the following operations $(i)$ multiplying a row by a non-zero number, $(ii)$ adding to a row a linear combination of other rows $(iii)$ reordering the rows and $(iv)$ reordering the columns we can transform the matrix $G^0$ to a matrix $G^1$ of the following form
\[
G^1=
\begin{pmatrix}
\mathbf{1} & G' 
\end{pmatrix},
\]             
where $\mathbf{1}$ is $M\times M$ unit matrix and $G'$ is a $M\times(N-M)$ matrix. Note that the first three operations used to transform $G^0$ to $G^1$  correspond to a transformation of the basis $(\hat{\varphi}_1,\ldots,\hat{\varphi}_M)$ to an other basis $(\hat{\varphi}'_1,\ldots,\hat{\varphi}'_M)$ of $\hat{F}^0$, while the fourth operation corresponds to renumbering the edges of $\gamma''$: $e_i\mapsto e'_i:=e_{\sigma(i)}$, where $\sigma$ is a permutation of the sequence $(1,\ldots,N)$. Thus  
\[
G^1_{ji}=\hat{\varphi}'_{j}\kappa_{e'_i}.
\]

Let $\{\hat{\varphi}^0_1,\ldots,\hat{\varphi}^0_{N}\}$ be operators constructed  with respect to $K_{{\gamma}''}$ exactly as in the proof of Lemma \ref{every-g}. Then
\[
\hat{\varphi}^0_{j}\kappa_{e'_i}=\delta_{ji}.
\]   
Thus if
\[
\Big(\hat{\varphi}''_1,\ldots,\hat{\varphi}''_{N}\Big):=\Big(\hat{\varphi}'_1,\ldots,\hat{\varphi}'_{M},\hat{\varphi}^0_{M+1},\ldots,\hat{\varphi}^0_{N}\Big)
\]
then an $N\times N$ matrix $G=(G_{ji})$ of components 
\[
G_{ji}:=\hat{\varphi}''_{j}\kappa_{e'_{i}}
\]
is of the following form
\[
G=
\begin{pmatrix}
\mathbf{1} & G'\\
\mathbf{0} & \mathbf{1}' 
\end{pmatrix},
\]  
where $\mathbf{0}$ is a zero $(N-M)\times M$ matrix, and $\mathbf{1}'$ is a unit $(N-M)\times(N-M)$ matrix. It means in particular that the operators $(\hat{\varphi}''_{1},\ldots,\hat{\varphi}''_N)$ are linearly independent.

To finish the proof we define 
\[
\hat{F}'':={\rm span} \, \{\hat{\varphi}''_{1},\ldots,\hat{\varphi}''_N\}
\]
and $\lambda'':=(\hat{F}'',K_{\gamma''})$.       
\end{proof}

\subsubsection{Checking Assumptions}
 
Now let us check whether the directed set $(\Lambda,\geq)$ satisfies all Assumptions listed in Section \ref{ad-as}. 

Proving Lemma \ref{Lambda-dir} we showed that $\Lambda$ satisfies Assumption \ref{k-Lambda}. Regarding Assumption \ref{f-Lambda} let $F_0=\{\varphi_{S_1},\ldots,\varphi_{S_N}\}$. For each face $S_i$ there exists an edge $e_i$ such that 
\[
\hat{\varphi}_{S_i}\kappa_{e_i}=1.
\]    
Let $\hat{F}_i:={\rm span}\{\hat{\varphi}_{S_i}\}$ and $\gamma_i:=\{e_i\}$. Thus $(\hat{F}_i,K_{\gamma_i})\in\Lambda$ for every $i=1,\ldots,N$. Since $\Lambda$ is a directed set there exists $(\hat{F},K_\gamma)\in \Lambda$ such that $(\hat{F},K_\gamma)\geq(\hat{F}_i,K_{\gamma_i})$ for every $i=1,\ldots,N$. Taking into account  Definition \ref{df-Lambda->} of the relation $\geq$ on $\Lambda$ we see that $\hat{F}$ contains all the operators $\{\hat{\varphi}_{S_1},\ldots,\hat{\varphi}_{S_N}\}$. Thus Assumption \ref{f-Lambda} is satisfied.          

Assumption \ref{RN} is satisfied by virtue of Lemma \ref{QKg-RN-lm}. Proving Lemma \ref{Lambda-dir} we showed that $\Lambda$ meets Assumption \ref{comp-f}. Assumption \ref{const} follows immediately from Equation \eqref{hphiS-ke}. Assumption \ref{non-deg} is satisfied by virtue of Definition \ref{df-Lambda} of $\Lambda$.

Regarding Assumption \ref{Q'=Q} we note first that, given $K_{\gamma},K_{\gamma'}$, there exists $K_{\gamma''}$ such that each d.o.f. in $K_{\gamma}\cup K_{\gamma'}$ is a linear combination of d.o.f. in $K_{\gamma''}$ (see Lemma \ref{f-Kg}). Suppose that $Q_{K_\gamma}=Q_{K_{\gamma'}}$. Then Equation \eqref{QKg-RN} applied to $K_{\gamma''}$ allows us to use Proposition \ref{KK'barK} and conclude that every d.o.f. in $K_\gamma$ is a linear combination of d.o.f. in $K_{\gamma'}$. Then, as stated by  Corollary \ref{g'g-lin}, $\gamma'\geq \gamma$ and Assumption \ref{Q'=Q} follows.  

By virtue of Definition \ref{df-Lambda->} of the relation $\geq$ on $\Lambda$ $K_{\gamma'}\geq K_\gamma$ if only $\gamma'\geq\gamma$  and Assumption \ref{lin-comb} follows from Corollary \ref{g'g-lin}. Assumption \ref{FF'} is satisfied  due to the same definition. 

We conclude that the directed set $(\Lambda,\geq)$ constructed for DPG satisfies all Assumptions which means that for this theory there exists the corresponding space $\D$ of quantum states. Let us emphasize that the directed set is built in a background independent manner hence the same can be said about the resulting space $\D$.     

\section{Discussion}

\subsection{General remarks}
 
The main result of this paper can be summarized as follows: if for a phase space of a field theory there exists a directed set $(\Lambda,\geq)$ defined as in Section \ref{outline} and satisfying all Assumptions listed in Section \ref{ad-as} then for the phase space there exists a corresponding space {$\D$} of kinematic quantum states. Note that this result reduces the task of constructing such a space of quantum states to the task of constructing an appropriate directed set---this fact was already used in the case of DPG described in the previous section and will be used also in  \cite{q-suit,ham-nv,q-stat} to construct a space of quantum states for TEGR. 

{Let us emphasize that, given phase space, the results of this paper {\em do not} guarantee either an existence of a space $\D$ or a uniqueness of it---for example, in the case of TEGR we managed to construct two essentially different spaces of this sort based on different sets of elementary d.o.f. \cite{q-suit,q-stat}.}

{It seems that one reason for the possibility of constructing many different spaces $\D$ for the same phase space is the independence of the construction of mathematical properties of the phase space as an infinite dimensional ``manifold'' of fields. Note that to construct $\D$ we first reduce the infinite dimensional  phase space $P\times Q$ to a ``finite dimensional phase space'' represented by an element $(\hat{F},K)\in\Lambda$---the set $K$ defines the reduced configuration space $Q_K$ while a basis of $\hat{F}$ corresponds to a finite number of momentum d.o.f.. This is exactly this first step i.e. the passage from the infinite dimensional space to the reduced ``finite dimensional one'' where most information about the mathematical structure of the original phase space is lost.\footnote{A similar situation occurs in the case of a space $\Abar$ of so called {\em general connections} known from Loop Quantum Gravity \cite{al-gen}. Given a space $\cal A$ of {\em smooth} connections on a principle bundle $P(\Sigma,G)$ over a manifold $\Sigma$ with a structure group $G$, one uses graphs in $\Sigma$ to reduce the infinite dimensional space $\cal A$ to finite dimensional spaces associated with the graphs. Next, one ``glues'' by means of {\em projective techniques} all the reduced spaces into $\Abar$. It turns out that ${\cal A}\subset\Abar$, but in $\Abar$ there are also many non-smooth ``connections'' including distributional ones. Moreover, $\Abar$  can be defined in an algebraic way without any reference to the smooth connections constituting $\cal A$ as well as to the topology of the bundle $P(\Sigma,G)$, which means that $\Abar$ is independent of some mathematical properties of $\cal A$ and the bundle.}} 

{Let us note that even the differential structures on reduced configuration spaces introduced in Section \ref{pre} are independent of any differential structure on $Q$ or $P\times Q$. To see this recall that to define the differential structure on $Q_K$ we imposed a requirement on the image of $\tilde{K}$---this requirement can be equivalently formulated as follows:  the image of the map
\begin{equation}
\tilde{K}\circ \pr_K:Q\to \R^N, 
\label{im-K}
\end{equation}
where $\pr_K$ is given by (2.4) and $N$ is the number of elements of $K$, is an $N$-dimensional submanifold of $\R^N$. Let us emphasize that we {\em do not} require the map \eqref{im-K} to be {\em differentiable} and the image of \eqref{im-K} can be found without refering to any differential structure on $Q$ or $P\times Q$ as it was done in the cases of DPG (see Section \ref{DPG}) and TEGR \cite{q-suit}.}   

{Let us also remark on so called ``non-compactness'' problem of LQG. The gauge group of the complex Ashtekar variables \cite{a-var-1,a-var-2} is $SL(2,\C)$. Trying to base on these variables a construction of a Hilbert space for LQG via the inductive techniques one immediately encounters obstacles \cite{oko-ncomp} caused by non-compactness of resulting reduced configuration spaces which are isomorphic to Cartesian products of finite numbers of $SL(2,\C)$. Since non-compactness of configurational spaces of a classical system and its subsystem is not an obstacle for defining a partial trace projecting states of a corresponding quantum system onto states of its corresponding subsystem one may hope to solve the non-compactness problem of LQG by means of projective techniques. However, the present construction does not seem to be very helpful for achieving this goal since it is highly dependent on linearity of configuration spaces and maps between them and any space $SL(2,\C)^N$ is obviously non-linear. Since finite dimensional linear spaces are non-compact Abelian Lie groups we can say that the inductive techniques applied currently in LQG work well in the case of reduced configuration spaces being compact Lie groups, while the projective techniques work well in the case of non-compact Abelian Lie groups. We leave open a question whether it is possible to generalize the projective techniques to the case of {\em non-compact non-Abelian} Lie groups.} 

\subsection{The space $\D$ and quantization \label{D-quant}}

Given theory, a space $\D$ of quantum states constructed for it according to the prescription presented in this paper is {{\em not a Hilbert space} but rather a convex set of states and is} usually too large to serve as a base of a quantum model of the theory. However, it is possible to construct a Hilbert space from the space $\D$ which may serve this purpose. The construction reads as follows \cite{kpt}.

Recall that, given element $\lambda$ of $\Lambda$, we denoted by ${\cal B}_\la$ the $C^*$-algebra of bounded linear operators on the Hilbert space $\h_\la$. The space $\D_\la$ of all density operators on $\h_\la$  can be treated as a space of all regular algebraic states on the  algebra ${\cal B}_\la$ i.e. linear $\C$-valued positive normed functionals  which acts on elements of ${\cal B}_\la$ via a trace: for every $\rho\in\D_\la$ the following map
\[
{\cal B}_\la\ni a \mapsto \tr(a\rho)\in\C
\]          
is a regular state. For every pair $\lambda'\geq\lambda$ of elements of $\Lambda$ there exists a unique injective $*$-homomorphism $\pi^*_{\lambda'\lambda}:{\cal B}_\la\to{\cal B}_{\la'}$ such that for every $a\in{\cal B}_\la$ and every $\rho'\in \D_{\la'}$
\[
\tr(\pi^*_{\la'\la}(a)\rho')=\tr(a\,\pi_{\la\la'}(\rho')),
\]   
where $\pi_{\la\la'}:\D_{\la'}\to\D_\la$ is the projection given by Definition \ref{forget}. It follows from Proposition \ref{pipipi} that for every triplet $\la''\geq\la'\geq\la$ 
\[
\pi^*_{\la''\la}=\pi^*_{\la''\la'}\circ\pi^*_{\la'\la},
\]  
which means that $\{{\cal B},\pi^*_{\la'\la}\}$ is an inductive family of $C^*$-algebras. Its inductive limit
\[
{\cal B}:= \underrightarrow{\lim}\,{\cal B}_\la
\]   
is naturally a $C^*$-algebra which can be interpreted as an algebra of quantum observables of the theory and each element $\rho$ of the space $\D$ defines a state on $\cal B$.    

Let us emphasize that the construction of both $\D$ and $\cal B$ is purely kinematic i.e. it uses merely properties of the phase space $P\times Q$ of a theory without any reference to its dynamics and possible constraints on the phase space. According to \cite{kpt} once both $\D$ and $\cal B$ are constructed one should refer to the dynamics---and to constraints if a theory is a constrained system---to single out an element of $\D$ and use it as a state on $\cal B$ to build via the GNS construction a Hilbert space for a quantum model of the theory. 

It is not clear, in general, which criterion should be used to single out such an element of $\D${, but if a classical field theory for which the space $\D$ is constructed possesses a vacuum state then a procedure described in \cite{kpt} may be applied. According to the procedure for every $\lambda\in\Lambda$ one defines a Hamiltonian $\hat{H}_\lambda$ as an operator on the Hilbert space $\h_\lambda$ corresponding to the Hamiltonian of the classical theory. Then one finds a vacuum state in $\h_\lambda$ defined by $\hat{H}_\lambda$. This vacuum pure state defines a unique state $\rho_\lambda\in\D_\lambda$. In the last step one organizes all the vacuum states $\{\rho_\lambda\}$ into an element of $\D$ (see \cite{kpt} for details) which by definition is the vacuum state for a quantum model of the theory. The vacuum state in $\D$ generates a Hilbert space as explained above and on this Hilbert space one may try to define a quantum dynamics and quantum constraints if necessary.}

{However, there are theories like general relativity where Hamiltonians are sums of constraints and in these cases the procedure mentioned above is not expected to work well---such theories may possess no vacuum state. In such a case one can modify the procedure and use it to define and solve quantum constraints directly on $\D$. Given constraint $C$ on the phase space of such a theory, one may define a family of operators $\{\hat{C}_\lambda\}_{\lambda\in\Lambda}$ corresponding to $C$ such that each $\hat{C}_\lambda$ acts on $\h_\lambda$ and may be interpreted as a quantum constraint on this Hilbert space. If for every $\lambda$ solutions of $\hat{C}_\lambda$ are elements of $\h_\lambda$ then one may use these pure solutions to obtain mixed solutions belonging to $\D_\lambda$. Then a solution in $\D$ of a quantum constraint corresponding to $C$ may be defined as e.g. an element $\rho=\{\rho_\lambda\}_{\lambda\in\Lambda}\in\D$ such that for every $\lambda$ the mixed state $\rho_\lambda\in\D_\lambda$ is a solution of $\hat{C}_\lambda$. Next, one may use the solutions in $\D$ to define Hilbert spaces via the GNS constructions. If it turned out that the Hilbert spaces do not coincide then they could be treated as distinct quantum sectors of the quantized theory.}

{It may also happen that the constraints $\{\hat{C}_\lambda\}$ do not have solutions in the corresponding Hilbert spaces. Then for every $\lambda$ one may search for solutions of $\hat{C}_\lambda$ in a space ${\cal S}^*_\lambda$ dual to a subspace ${\cal S}_\lambda$ of $\h_\lambda$---${\cal S}_\lambda$ can be chosen to be e.g. a space of Schwartz functions (smooth complex functions of rapid decrease) on the corresponding $Q_K$. In the next step one may try to organize solutions found in the spaces $\{{\cal S}^*_\lambda\}$ into a Hilbert space or a convex set of states. Note that if ${\cal S}_\lambda$ is the space of Schwartz functions on $Q_K$ then almost periodic functions on $Q_K$ considered  in Section \ref{by-prod} are elements of its dual space ${\cal S}^*_\lambda$ i.e. they are distributions on ${\cal S}_\lambda$. Thus the Hilbert space $\h$ constructed in Section \ref{by-prod} is built from some elements of the spaces $\{{\cal S}^*_\lambda\}$ and perhaps this construction may provide a hint how to organize solutions of this sort into a Hilbert space.}

{Obviously, the outline of the procedures above does not close the issue of defining and solving quantum constraints in a space $\D$ of kinematic quantum states---first of all it seems to be necessary to construct and study examples of quantum constraints on $\D$ and their solutions.}   

{Let us note finally that it is the dynamics (and constraints) of a theory which, hopefully, should allow to pick up not only a proper Hilbert space among those which can be constructed from a space $\D$ and a corresponding $C^*$-algebra $\cal B$, but also a proper space $\D$ among those which may be constructed from a phase space---e.g. if, given space $\D$, there are obstacles which make it impossible to define quantum constraints on it then one should abandon this space and try to construct an other one from the same phase space.}

\subsection{The present construction versus the original one \label{disc-comp}}

As already stated the present construction is a generalization of one introduced in \cite{kpt}. Here we are going to compare both constructions.

It is evident that both here and in \cite{kpt} linearity of spaces and maps applied in the constructions plays a crucial role. The main difference between the constructions is the source of the linearity. 

In \cite{kpt} it is {\em assumed} that the Hamiltonian configuration space $Q$ is a linear space. Since the momentum space $P$ is naturally linear  the whole phase space $P\times Q$ is equipped with a linear structure. Both momentum and configurational d.o.f. are linear functions on $P$ and $Q$ respectively and with every relation $\lambda'\geq\lambda$ there are associated appropriate linear maps between spaces constituting the elements $\lambda'$ and $\lambda$.

The present construction does not require the space $Q$ to be linear but needs instead Assumptions \ref{RN} and \ref{lin-comb} to be imposed on configurational d.o.f. in order to define linear structures on reduced configurations spaces $\{Q_K\}$ and to ensure linearity of projections $\{\pr_{KK'}\}$. Let us emphasize that imposing these two assumptions {\em does not} amount to linearizing the space $Q$, but rather to embedding it into a larger linear space. To see what this larger linear space is recall that it was shown in Section \ref{by-prod} that from every directed set $(\Lambda,\geq)$ satisfying Assumptions listed in Section \ref{ad-as} one can derive a directed set $(\mathbb{K},\geq)$. It is easy to realize that $\{Q_K,\pr_{KK'}\}_{K\in\mathbb{K}}$ is a projective family. By virtue of Assumptions \ref{RN} and \ref{lin-comb} the projective limit 
\[
\bar{Q}:=\underleftarrow{\lim} \,Q_K
\]
of the projective family is equipped with a linear structure and it is the larger linear space. There is a natural map from $Q$ into $\bar{Q}$
\[
Q\ni q\mapsto (\,[q]_K)_{K\in\mathbb{K}}\in \bar{Q},
\]                    
where $[q]_K$ denotes the equivalence class of $q$ given by the relation $\sim_K$. Since configurational elementary d.o.f. separate points in $Q$ and Assumption \ref{k-Lambda} holds the map above is {\em injective} and it is the embedding of $Q$ into the larger linear space $\bar{Q}$.       

Regarding momentum elementary d.o.f. of the present construction, since they are supposed to define via a Poisson bracket (or its regularization) an operator on the space of cylindrical functions they should be linear functions on $P$ \cite{acz}. Of course, configurational elementary d.o.f. cannot be said to be linear or not because $Q$ is not a linear space, but they cannot be arbitrary functions on $Q$ because of the linearity of momentum d.o.f and Assumption \ref{const}---assuming that $q\in Q$ is a tensor field on a manifold and configurational d.o.f. $\kappa$ is defined as an integral of a function constructed from components of $q$ then to satisfy Assumption \ref{const} the function should be a polynomial of the components of degree $1$.            

Another difference between the two construction concerns elementary d.o.f.: in \cite{kpt} the canonical variables are fields on a three-dimensional manifold and the d.o.f. are ``regular'' functionals of the fields i.e. defined via three-dimensional integrals, while in the present paper we do not assume any ``regularity'' of d.o.f.. In particular, we admit ``distributional'' functionals as ones used in the case of DPG where fields defined on a three-dimensional manifold were integrated along one-dimensional edges and over two-dimensional faces. However, Poisson brackets of such ``non-regular'' functionals  are not always well defined and this is why we followed \cite{acz} in applying operators $\{\hat{\varphi}\}$ defined via a regularization of a bracket. 

Let us note finally that in \cite{kpt} the projections $\{\pi_{\lambda\la'}\}$ are defined without introducing injections $\{\omega_{\la'\la}\}$ and therefore at the first sight it is not clear whether the prescription for constructing the projections used in \cite{kpt} coincides with one applied in the present paper. However, a closer examination shows that the prescription in \cite{kpt} is a particular case of the one presented here and the apparent difference comes only from the different descriptions.     

\paragraph{Acknowledgment} I am very grateful to Prof. Jerzy Kijowski for introducing to me his idea of constructing spaces of quantum states via projective techniques and for valuable discussions.


\end{document}